\newcommand{\lemlab}[1]{\label{lemma:#1}}
\newcommand{\lemref}[1]{Lemma~\ref{lemma:#1}}
\newtheorem{theorem}{Theorem}[section]
\newtheorem{lemma}[theorem]{Lemma}
\newtheorem{claim}[theorem]{Claim}
\def\reals{{\mathbb R}}
\def \sphere{{\mathbb S}}
\def\G{\EuScript{G}}
\def\T{\EuScript{T}}
\def\bd{{\partial}}
\def\eps{{\varepsilon}}
\def\ph{{\varphi}}
\def\poly{\diamond}
\newcommand{\ignore}[1]{}
\def\bisect{b}
\def\Nbrs{N}
\def\distfn{\varphi}
\def\etal{\textsl{et~al.}}
\def\conv{\mathop{\mathrm{conv}}}
\def\SDG{\mathop{\mathrm{SDG}}}
\def\DT{\mathop{\mathrm{DT}}}
\def\VD{\mathop{\mathrm{VD}}}
\def\Vor{{\mathop{\mathrm{Vor}}}}
\def\intr{\mathop{\mathrm{int}}}
\long\def\@makecaption#1#2{
   \vskip 10pt
   \setbox\@tempboxa\hbox{{\footnotesize \textbf{#1.} #2}}
   \ifdim \wd\@tempboxa >\hsize         
       {\footnotesize \textbf{#1.} #2\par}
     \else                              
       \hbox to\hsize{\hfil\box\@tempboxa\hfil}
   \fi}
\begin{document}

\begin{titlepage}

\title{
Stable Delaunay Graphs\thanks{%
An earlier version \cite{StableFull} of this paper appeared in
{\it Proc. 26th Annual Symposium on Computational Geometry},
2010, 127--136.}}
\author{Pankaj K. Agarwal\thanks{%
Department of Computer Science, Duke University, Durham, NC
27708-0129, USA, {\tt pankaj@cs.duke.edu}.}
\and
Jie Gao\thanks{%
Department of Computer Science, Stony Brook University, Stony
Brook, NY 11794, USA, {\tt jgao@cs.sunysb.edu}. }
\and
Leonidas J. Guibas\thanks{%
Department of Computer Science, Stanford University, Stanford,
CA 94305, USA, {\tt guibas@cs.stanford.edu}.}
\and
Haim Kaplan\thanks{%
School of Computer Science, Tel Aviv University, Tel~Aviv 69978, Israel.
{\tt haimk@tau.ac.il}.}
\and
Natan Rubin\thanks{%
Department of Computer Science, Ben-Gurion University of the Negev, Beer-Sheva, Israel 84105,
{\tt rubinnat.ac@gmail.com}.}
\and
Micha Sharir\thanks{%
School of Computer Science, Tel Aviv University, Tel~Aviv 69978, Israel;
and Courant Institute of Mathematical Sciences, New York University,
New York, NY~~10012,~USA.  {\tt michas@tau.ac.il}.}
}

\maketitle

\begin{abstract}
Let $P$ be a set of $n$ points in $\reals^2$, and let $\DT(P)$ denote its Euclidean Delaunay triangulation. We introduce the notion of an edge of $\DT(P)$ being \emph{stable}.
Defined in terms of a parameter $\alpha>0$, a 
Delaunay edge $pq$ is called $\alpha$-stable, if the (equal) angles at 
which $p$ and $q$ see the corresponding Voronoi edge $e_{pq}$ are at least $\alpha$.
A subgraph $G$ of $\DT(P)$ is called {\em $(c\alpha, \alpha)$-stable
Delaunay graph} ($\SDG$ in short), for some constant $c \ge 1$, 
if every edge in  $G$ is $\alpha$-stable and every $c\alpha$-stable of $\DT(P)$ is in $G$.

We show that if an edge is stable in the Euclidean Delaunay triangulation
of $P$, then it is also a stable edge, though for a different value of
$\alpha$, in the Delaunay triangulation of $P$ under any convex distance function that is 
sufficiently close to the Euclidean norm, and vice-versa. In particular, a 
$6\alpha$-stable edge in $\DT(P)$ is $\alpha$-stable in the Delaunay triangulation under the distance function induced by a regular $k$-gon for 
$k \ge 2\pi/\alpha$, and vice-versa. 
Exploiting this relationship and the analysis in~\cite{polydel},
we present a linear-size kinetic data
structure (KDS) for maintaining an $(8\alpha,\alpha)$-$\SDG$ as the 
points of $P$ move.  If the points move along
algebraic trajectories of bounded degree, the KDS
processes nearly quadratic events during the motion, each of which can processed in $O(\log n)$ time.
Finally, we show that a number of useful properties of $\DT(P)$ are retained by SDG of $P$.
\end{abstract}
\end{titlepage}

\section{Introduction}
\label{sec:intro}

Let $P$ be a set of $n$ points in $\reals^2$. For a point $p \in P$, the
(Euclidean) \emph{Voronoi cell} of $p$ is defined as 
$$
\Vor(p) = \{ x\in\reals^2 \mid \|xp\| \le \|xp'\|\, \forall p' \in P \}.
$$
The Voronoi cells of points in $P$ are nonempty, have pairwise-disjoint
interiors, and partition the plane. The planar
subdivision induced by these Voronoi cells is referred to as the 
(Euclidean) \emph{Voronoi diagram} of $P$ and we denote it as $\VD(P)$.
The \emph{Delaunay graph} of $P$ is the dual
graph of $\VD(P)$, i.e., $pq$ is an edge of the
Delaunay graph if and only if $\Vor(p)$ and $\Vor(q)$ share an edge.
This is equivalent to the existence of a circle passing through $p$ and $q$
that does not contain any other point of $P$ in its interior---any
circle centered at a point of $\bd\Vor(p)\cap\bd\Vor(q)$ and passing
through $p$ and $q$ is such a circle.  If no four
points of $P$ are cocircular, then the planar subdivision induced by the Delaunay
graph is a triangulation of the convex hull of $P$---the well-known (Euclidean) 
\emph{Delaunay triangulation} of $P$, denoted as $\DT(P)$.
See Figure~\ref{fig:VDT}~(a). 
$\DT(P)$ consists of all triangles whose circumcircles do not
contain points of $P$ in their interior. 
Delaunay triangulations and Voronoi diagrams are fundamental to much of
computational geometry and its applications. See \cite{AKL} for a
very recent textbook on these structures.


\begin{figure}[htbp]
\centering
\begin{tabular}{ccc}
\includegraphics[scale=1.0]{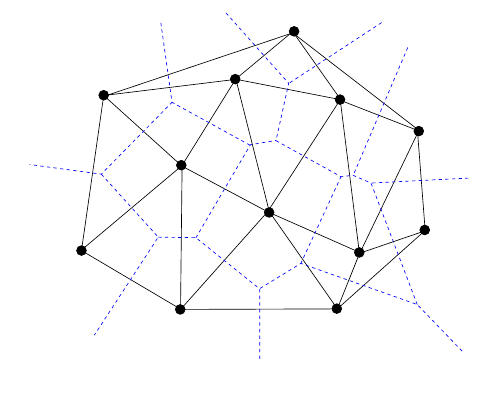}&\hspace*{1cm}&
\includegraphics[scale=1.0]{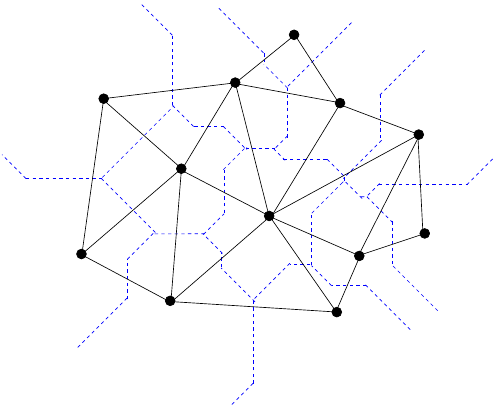}\\
\small (a) && \small (b)
\end{tabular}
\caption{(a) The Euclidean Voronoi diagram (dotted) and Delaunay
triangulation (solid); (b) the $Q$-Voronoi diagram and $Q$-Delaunay triangulation for an axis-parallel square
$Q$, i.e., the Voronoi and Delaunay diagrams under the $L_\infty$-metric.}
\label{fig:VDT}
\end{figure}
In many applications of Delaunay/Voronoi methods (e.g., mesh generation
and kinetic collision detection), the input points are moving continuously,
so they need to be efficiently updated as motion occurs.
Even though the motion of the points is continuous, the combinatorial and
topological structure of $\VD(P)$ and $\DT(P)$ change only at
discrete times when certain ``critical events'' occur. A challenging open question
in combinatorial geometry is to bound the number of critical events if each point
of $P$ moves along an algebraic trajectory of constant degree.

Guibas et al.~\cite{gmr-vdmpp-92} showed a roughly cubic upper bound of
$O(n^2 \lambda_s(n))$ on the number of critical events.
Here $\lambda_s(n)$ is the maximum length
of an $(n,s)$-Davenport-Schinzel sequence~\cite{SA95}, and $s$ is a constant
depending on the degree of the motion of the points.
See also Fu and Lee~\cite{FL}. The best known lower
bound is quadratic~\cite{SA95}. Recent works of Rubin \cite{Rubin, RubinUnit} 
establish an almost quadratic bound of $O(n^{2+\eps})$, for any
$\eps>0$, for the restricted cases where any four points of $P$ can
be cocircular at most {\it two} or {\it three} times. In particular,
the latter study \cite{RubinUnit} covers the case of points moving
along lines at common unit speed, which has been highlighted as a major
open problem in discrete and computational geometry; see
\cite{TOPP}. Nevertheless, no sub-cubic upper bound is known for
more general motions, including  the case where the points of $P$
are moving along lines at non-uniform, albeit fixed, speeds. It is
worth mentioning that the analysis in \cite{Rubin}, and even more so
in \cite{RubinUnit}, is fairly involved, which results in a huge
implicit constant of proportionality.

Given this gap in the bound on the number of critical events, it is natural to ask whether
one can define a large subgraph of the Delaunay graph of $P$ so that
(i) it provably experiences at most a nearly quadratic number of
critical events, (ii) it is reasonably easy to define and maintain,
and (iii) it retains useful properties for further applications. 
This paper defines such a subgraph of the Delaunay graph, shows 
that it can be maintained efficiently, and proves that it preserves 
a number of useful properties of $\DT(P)$.

\paragraph{Related work.}
It is well known that $\DT(P)$ can be maintained efficiently using the so-called
kinetic data structure framework proposed by Basch~\etal~\cite{bgh-dsmd-99}. 
A triangulation $\T$ of the convex hull $\conv(P)$ of $P$ is the
Delaunay triangulation of $P$
if and only if for every edge $pq$ adjacent to two triangles
$\triangle pqr^+$ and $\triangle pqr^-$ in $\T$, the circumcircle of
$\triangle pqr^+$ (resp.,\ $\triangle pqr^-$) does not contain $r^-$
(resp.,\ $r^+$).  Equivalently,
\begin{equation}\label{eq:DT}
\angle pr^+q + \angle pr^-q < \pi .
\end{equation}
Equality occurs when $p,q,r^+,r^-$ are cocircular, which generally
signifies that a combinatorial change in $\DT(P)$ (a so-called 
\emph{edge flip}) is about to take place.
We also extend (\ref{eq:DT}) to apply to edges $pq$ of the hull, each having
only one adjacent triangle, $\triangle pqr^+$. In this case we take $r^-$
to lie at infinity, and put $\angle pr^-q=0$. An equality in (\ref{eq:DT})
occurs when $p,q,r^+$ become collinear (along the hull boundary), and again
this signifies a combinatorial change in $\DT(P)$.

This makes the maintenance of $\DT(P)$ under point motion quite simple: an
update is necessary only when the empty circumcircle condition (\ref{eq:DT})
fails for one of the edges, i.e., for an edge $pq$, adjacent to triangles
$\triangle pqr^+$ and $\triangle pqr^-$, $p$, $q$,
$r^+$, and $r^-$ become cocircular.\footnote{We assume the motion of the
points to be sufficiently generic, so that no more than four points can
become cocircular at any given time, and so that equality in (\ref{eq:DT})
is not a local maximum of the left-hand side.}
Whenever such an event happens,
the edge $pq$ is \emph{flipped} with $r^+r^-$ to restore Delaunayhood. 
Keeping track of these cocircularity events is straightforward, and each
such event is detected and processed in $O(\log n)$ time
\cite{gmr-vdmpp-92}. However, as
mentioned above, the best known upper bound on the number of events processed by this
KDS (which is the number of topological changes in $\DT(P)$ during the motion), 
assuming that the points of $P$ are moving along algebraic trajectories of
bounded degree, is near cubic~\cite{gmr-vdmpp-92} (except for the special 
cases treated in \cite{Rubin,RubinUnit}).

So far we have only considered the Euclidean Voronoi and Delaunay diagrams,
but a considerable amount of literature exists on Voronoi and Delaunay diagrams
under other norms and so-called \emph{convex distance functions}; see
Section~\ref{sec:prelim} for details.

Chew~\cite{Chew} showed that the Delaunay triangulation of $P$ under the $L_1$- or
$L_\infty$-metric experiences only a near-quadratic number of events, if 
the motion of the points of $P$ is
algebraic of bounded degree. In the companion paper~\cite{polydel}, we present a
kinetic data structure for maintaining the Voronoi diagram and Delaunay
triangulation of $P$ under a polygonal convex distance function for an 
arbitrary convex polygon $Q$.
(see Section~\ref{sec:prelim} for the definition) that processes only a
near-quadratic number of events,
and can be updated in $O(\log n)$ time at each event. Since a regular convex
$k$-gon approximates a circular disk, it is tempting to maintain the 
Delaunay triangulation under a polygonal convex distance function as a 
(hopefully substantial) portion of the Euclidean Delaunay graph of $P$.
Unfortunately, the former is not necessarily a subgraph of the latter~\cite{AKL}.

Many subgraphs of $\DT(P)$, such as the Euclidean minimum spanning tree (MST), 
Gabriel graph, relative neighborhood graph, and $\alpha$-shapes, have been used
extensively in a wide range of applications (see e.g.~\cite{AKL}). However no
sub-cubic bound is known on the number of discrete changes in their structures
under an algebraic motion of the points of $P$ of bounded degree. Furthermore, no efficient
kinetic data structures are known for maintaining them, for unlike $\DT(P)$, they
may undergo a ``non-local'' change at a critical event; see~\cite{AEGH, BGZProximity, Proximity} 
for some partial results on maintaining an MST.

\paragraph{Our results.}
\textit{\textbf{Stable Delaunay edges:}}\hspace{3mm}
We introduce the notion of \emph{$\alpha$-stable Delaunay edges},
for a fixed parameter $\alpha>0$, defined as follows.  Let $pq$ be
a Delaunay edge under the Euclidean norm, and let $\triangle pqr^+$
and $\triangle pqr^-$ be the two Delaunay triangles incident to $pq$.
Then $pq$ is called {\em $\alpha$-stable} if its opposite angles in
these triangles satisfy
\begin{equation}
\label{eq:SDG}
\angle pr^+q + \angle pr^-q \leq \pi-\alpha  .
\end{equation}
As above, the case where $pq$ lies on $\conv(P)$ is treated as if
 $r^-$, say,  lies at infinity, so that the corresponding
angle  $\angle pr^-q$ is equal to $0$. An equivalent and more useful
definition, in terms of the dual Voronoi diagram, is that $pq$ is
$\alpha$-stable if \textit{the angles at which $p$ and $q$ see
their common Voronoi edge $e_{pq}$ are at least $\alpha$ each.}
See Figure~\ref{Fig:LongDelaunay}(a).

In the case where $pq$ lies on $\conv(P)$ the corresponding
dual Voronoi edge $e_{pq}$ is an infinite ray emanating from some
Voronoi vertex $x$. We define the angle in which a point $p$ {\em
see} such a Voronoi ray to be the angle between the segment $px$ and
an infinite ray parallel to $e_{pq}$ emanating from $p$. With this
definition of the angle in which a point of $\conv(P)$  sees a
Voronoi ray, it is easy to check that the alternative definition of
$\alpha$-stability is equivalent to the ordinal definition
also when $pq$ lies on $\conv(P)$.
We call the Voronoi edges corresponding to $\alpha$-stable Delaunay edges
\emph{$\alpha$-long} (and call the remaining edges \emph{$\alpha$-short}). 
See Figure~\ref{Fig:LongDelaunay}. Note that
for $\alpha=0$, when no four points are cocircular, (\ref{eq:SDG})
coincides with (\ref{eq:DT}).

\begin{figure}[htbp]
\centering
\begin{tabular}{ccc}
\input{LongDelaunay.pspdftex}&\hspace*{15mm}&
\includegraphics[scale=1.0]{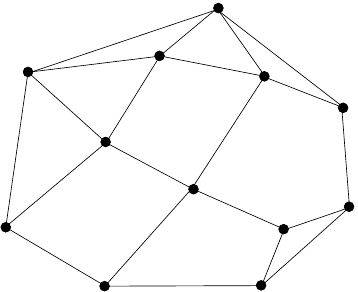}\\
\small (a)&&\small(b)
\end{tabular}
\caption{(a) An $\alpha$-stable edge $pq$ in $\DT(P)$, and its dual edge
$e_{pq}=ab$ in $\VD(P)$; the fact that $\angle r^+ + \angle r^- = \pi-\alpha$
follows by elementary geometric considerations. 
(b) A SDG of the Delaunay triangulation $\DT(P)$ shown in 
Figure~\protect\ref{fig:VDT}~(a), with $\alpha=\pi/8$.}
\label{Fig:LongDelaunay}
\end{figure}

A justification for calling such edges stable lies in the following
observation: If a Delaunay edge $pq$ is $\alpha$-stable then it
remains in $\DT(P)$ during any continuous motion of the points of $P$ for which
every angle $\angle prq$, for $r\in P\setminus\{p,q\}$, changes
by at most $\alpha/2$. This is clear because, as is easily verified,
at any time when $pq$ is $\alpha$-stable we have
$\angle pr^+q + \angle pr^-q \leq \pi-\alpha$ for \emph{any} pair
of points $r^+$, $r^-$ lying on opposite sides of the line $\ell$
supporting $pq$, so,
if each of these angles changes by at most $\alpha/2$ we still have
$\angle pr^+q + \angle pr^-q \le \pi$ for every such pair $r^+$, $r^-$,
implying that $pq$ remains an edge of $\DT(P)$.\footnote{%
  This argument also covers the cases when a point $r$ crosses $\ell$
  from side to side: Since each point, on either side of $\ell$, sees
  $pq$ at an angle of $\leq \pi-\alpha$, it follows that no point can cross
  $pq$ itself -- the angle has to increase from $\pi-\alpha$ to $\pi$. Any
  other crossing of $\ell$ by a point $r$ causes $\angle prq$ to
  decrease to $0$, and even if it increases to $\alpha/2$ on the other side
  of $\ell$, $pq$ is still an edge of $\DT$, as is easily checked.}
Hence, as long as the ``small angle change'' condition
holds, stable Delaunay edges remain a ``long time'' in the
triangulation.  Informally speaking, the non-stable edges $pq$
of $\DT(P)$ are those for which $p$ and $q$
are almost cocircular with their two common Delaunay neighbors
$r^+$, $r^-$, and hence $pq$ is more likely to get flipped ``soon.''

\medskip

\textit{\textbf{Stable Delaunay graph:}}\hspace{3mm}
For two parameters $0 \le \alpha \le \alpha' \le \pi$,
we call a subgraph $\G$ of $\DT(P)$
an {\em $(\alpha',\alpha)$-stable Delaunay graph} (an
$(\alpha',\alpha)$-$\SDG$ for short) if
\begin{itemize}
\item[(S1)] every edge of $\G$ is $\alpha$-stable, and
\item[(S2)] every $\alpha'$-stable edge of $\DT(P)$ belongs to $\G$.
\end{itemize}
Note that an $(\alpha',\alpha)$-$\SDG$ is not uniquely defined even
for fixed $\alpha, \alpha'$ because the edges that are $\alpha$-stable
but not $\alpha'$-stable may or may not be in $\G$. Throughout this paper,
$\alpha'$ will be some fixed (and reasonably small) multiple of $\alpha$.

\begin{figure}[hbt]
\begin{center}
\input{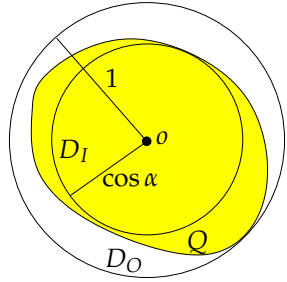}
\caption{A convex distance function, induced by $Q$, that is $\alpha$-close
to the Euclidean norm.}
\label{fig:normasis}
\end{center}
\end{figure}

Our main result is that a stable edge of the Euclidean Delaunay triangulation appears
as stable edge in the Delaunay triangulation under any convex distance function
that is sufficiently close to the Euclidean norm (see Section~\ref{sec:prelim} for more details). 
More precisely, we say that the  distance function induced by a compact convex set $Q$ is
\emph{$\alpha$-close} to the Euclidean norm if $Q$ is contained in
the unit disk $D_O$ and contains the disk $D_I = (\cos\alpha)
D_O$ both centered in the origin.\footnote{%
The Hausdorff distance between $Q$ and $D_O$ is at most
$1-\cos\alpha\approx \alpha^2/2$.}
See Figure \ref{fig:normasis}.
In particular, for $k=\pi/\alpha$, the regular $k$-gon $Q_k$ is such a set, as easy trigonometry shows. 
We prove the following:
\begin{theorem} \label{thm:Qnorm}
Let $P$ be a set of $n$ points in $\reals^2$, $\alpha>0$ a parameter,
and $Q$ a compact, convex set inducing a convex distance
function $d_Q(\cdot,\cdot)$ that is $\alpha$-close to the Euclidean norm. Then the following properties hold.
\begin{itemize}
\item[(i)] Every $11\alpha$-stable Delaunay edge under the Euclidean norm
is an $\alpha$-stable Delaunay edge under $d_Q$.
\item[(ii)] Symmetrically, every $11\alpha$-stable Delaunay edge under
$d_Q$ is also an $\alpha$-stable Delaunay edge under the Euclidean norm.
\end{itemize}
\end{theorem}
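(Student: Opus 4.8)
The plan is to prove both parts through a single mechanism: a ``metric-free'' reformulation of $\alpha$-stability, plus a sandwiching estimate relating $Q$-disks to Euclidean disks. Since (i) and (ii) are mirror images, I would carry out (i) in detail and then indicate the changes for (ii).

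First I would set up the robust characterization of Euclidean $\alpha$-stability. Writing $\ell$ for the line through $pq$ and $\theta^{\pm}=\max\{\angle prq : r\in P \text{ strictly on the }\pm\text{ side of }\ell\}$ (with $\theta^{-}=0$ for a hull edge), one checks, by parametrizing empty circles through $p$ and $q$ by the position of their centre on the bisector of $pq$ and using the inscribed-angle theorem, that $pq\in\DT(P)$ iff $\theta^{+}+\theta^{-}\le\pi$; that the common angle at which $p$ and $q$ see the Voronoi edge $e_{pq}$ is exactly $\pi-\theta^{+}-\theta^{-}$; and hence that $pq$ is $\alpha$-stable iff $\theta^{+}+\theta^{-}\le\pi-\alpha$. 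Equivalently, the two extremal empty circles $D^{+},D^{-}$ (the circumcircles of the two Delaunay triangles at $pq$) meet at $p$, and at $q$, in a wedge of opening angle $\ge\alpha$, and the union $\Lambda$ of all empty circles with centre on $e_{pq}$ is a $P$-free region pinched at $p$ and at $q$ into such a wedge. I would then use the analogous description of $\alpha$-stability under $d_Q$ from Section~\ref{sec:prelim} and the companion paper~\cite{polydel}: $pq$ is an $\alpha$-stable $d_Q$-Delaunay edge iff there is a continuum of empty $Q$-disks through $p$ and $q$ whose centres sweep the $d_Q$-Voronoi edge $e^{Q}_{pq}$, with $p$ and $q$ each seeing $e^{Q}_{pq}$ at angle $\ge\alpha$.

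The bridge is the sandwiching fact, immediate from $(\cos\alpha)D_O\subseteq Q\subseteq D_O$: every $Q$-disk of radius $\lambda$ lies between the two concentric Euclidean disks of radii $\lambda\cos\alpha$ and $\lambda$, and at each boundary point its outward normal is within angle $\alpha$ of the radial direction. From this I would derive a conversion lemma: given an empty Euclidean disk $C(c,\rho)$ whose boundary passes through $p$ and $q$, enlarging $\rho$ by a factor at most $1/\cos\alpha$ and moving the centre by $O(\alpha\rho)$ yields a $Q$-disk $\KK$ whose boundary still passes through $p$ and $q$; its centre lies on the $Q$-bisector $\gamma_{pq}$ (which, applying the sandwiching fact to the ratio $|xp|/|xq|$, stays in an $O(\alpha)$ angular tube around the Euclidean bisector), and the direction from $p$ (and from $q$) to the new centre differs from that to $c$ by $O(\alpha)$; running this in reverse turns an empty $Q$-disk through $p,q$ into an empty Euclidean one through $p,q$ with the same $O(\alpha)$ distortions. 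Then for (i): let $pq$ be $11\alpha$-stable under the Euclidean norm, so $p$ and $q$ see $e_{pq}=[a,b]$ at angle $\ge 11\alpha$; for each centre $c$ in a slightly shrunk sub-segment $[a',b']\subseteq[a,b]$ apply the conversion lemma to $C(c,|cp|)$ to get an empty $Q$-disk $\KK_c$ through $p$ and $q$; as $c$ runs over $[a',b']$ the centres of the $\KK_c$ trace a connected sub-arc of $e^{Q}_{pq}$, and the angle it subtends at $p$ (and at $q$) is at least $11\alpha$ minus the loss from shrinking $[a,b]$ and the $O(\alpha)$ distortions at the two endpoints and in the map $c\mapsto(\text{centre of }\KK_c)$. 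A tight accounting of these losses---two endpoint terms, the bisector-tube term, and the $p$-versus-$q$ asymmetry term---keeps this angle $\ge\alpha$, giving $\alpha$-stability under $d_Q$; part (ii) is obtained by running the conversion in reverse starting from an $11\alpha$-stable $d_Q$-Delaunay edge.

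The main obstacle is ensuring the converted disk is genuinely $P$-free. A $Q$-disk through $p$ and $q$ is necessarily slightly ``fatter'' than the Euclidean disk through $p,q$ with the same centre, so it protrudes a little beyond $\Lambda$ near $p$ and near $q$, and one must verify it still swallows no point of $P$; this is exactly where $11\alpha$-stability (not mere Delaunayhood) enters, since the wide wedge of $P$-freeness at $p$ and $q$ supplies the clearance absorbing the $O(\alpha)$ protrusion, while the extremal circles $D^{\pm}$ keep the protrusion away from $r^{+}$ and $r^{-}$. A secondary difficulty is that, unlike in the Euclidean case, $\gamma_{pq}$ and the radius-along-$\gamma_{pq}$ function are only piecewise-smooth and need not be monotone, so one must appeal to continuity and to the structural description of $d_Q$-Voronoi edges from~\cite{polydel} to be sure the $\KK_c$ really cover a connected piece of $e^{Q}_{pq}$. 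Finally, getting the constant down to exactly $11$ (and to $6$ for the regular $k$-gon) requires carrying out all the $O(\alpha)$ estimates tightly rather than with slack.
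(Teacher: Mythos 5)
Your high-level plan---parametrize empty disks/$Q$-disks through $p,q$ by the angular direction of their centres from $p$, shrink the angular span, and argue emptiness by containment---matches the structure of the paper's proof. The sandwiching fact you cite ($(\cos\alpha)D_O\subseteq Q\subseteq D_O$ and the $\alpha$-bound on normal deviation) is indeed the paper's starting point (Claims 4.1--4.3).

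However, there is a genuine gap at the step you yourself flag as ``the main obstacle.'' Your conversion lemma, as stated, produces \emph{some} $Q$-disk $\KK$ through $p,q$ whose centre lies within $O(\alpha\rho)$ of $c$, but gives no reason for $\KK$ to be $P$-empty, and that is the entire content of the theorem. The intuition that ``the wide wedge of $P$-freeness supplies the clearance absorbing the $O(\alpha)$ protrusion'' is directionally right but does not translate into a proof: a $Q$-disk through $p,q$ moved by $O(\alpha\rho)$ and dilated by $1/\cos\alpha$ is not contained in the single empty Euclidean disk $C(c,\rho)$; it can protrude on either side, and you must then show that the protruding parts lie inside \emph{other} known-empty regions. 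The paper's crucial ingredient, Lemma~\ref{Lemma:ContainQ}, is precisely this: for a direction $u$, the half of $Q_{pq}(u)$ on one side of $pq$ sits inside the shifted Euclidean disk $D_{pq}(u+5\alpha)$, and the other half sits inside $D_{pq}(u-5\alpha)$; proving it needs the tangent-angle claims \emph{plus} a topological argument that the two boundary curves cannot meet at a third point $w$ (otherwise the slopes of the tangents at $p$, $q$, $w$ are forced into an impossible configuration). This two-sided containment into two \emph{different} empty disks is exactly what your single-disk conversion lemma does not deliver; without it, ``swallows no point of $P$'' is unsubstantiated. Thus the $O(\alpha)$ loss-accounting you propose at the end has nothing rigorous to account: the losses are not just angular shifts but potential $P$-intrusions, and bounding the latter requires a containment statement you have not formulated.

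A secondary issue is the claim that part (ii) is ``obtained by running the conversion in reverse.'' In the paper, part (ii) has a different logical structure: one assumes for contradiction that the Euclidean disk $D_{pq}(u)$ contains some $r\in P$, and then uses Lemma~\ref{Lemma:ContainQ}(i) applied to $p,r$ (not to $p,q$) to show that $Q_{pr}(u-5\alpha)$ must contain $q$, contradicting the emptiness of $Q_{pq}(u-5\alpha)$. A Euclidean disk through $p,q$ inscribed near a $Q$-disk's centre need \emph{not} be contained in that $Q$-disk (the inclusions run the wrong way), so the na\"ive reversal of your conversion does not by itself transfer emptiness. If you want to carry out the reverse direction, you will need the same containment lemma applied to a third point, which is a genuinely different manipulation and should be made explicit.
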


In particular, if $Q$ is a regular $k$-gon for $k \ge 2\pi/\alpha$, then the
above theorem holds for $Q$. In the companion paper~\cite{polydel}, we have 
presented an efficient kinetic data structure for maintaining the Delaunay 
triangulation and Voronoi diagram of $P$ under a polygonal convex distance function. 
Using this result, we obtain the second main result of the paper:

\begin{theorem}\label{Thm:MaintainSDGPolyg}
Let $P$ be a set of $n$ moving points in $\reals^2$ under algebraic
motion of bounded degree, and let $\alpha \in (0,\pi)$ be a parameter.
A Euclidean $(8\alpha,\alpha)$-stable Delaunay graph of $P$ can be
maintained by a linear-size KDS that processes
$O\left(\frac{1}{\alpha^4}n\lambda_r(n)\right)$ events and
updates the SDG at each event in $O(\log n)$ time.
 Here $r$ is a constant that depends on the degree of the motion of the points
of $P$, and $\lambda_r(n)$ is the maximum length of a Davenport-Schnizel sequence of order $r$.
\end{theorem}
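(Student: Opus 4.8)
The idea is to read the stable Delaunay graph off of a \emph{polygonal} Delaunay triangulation, for which the companion paper~\cite{polydel} already provides an efficient KDS. Fix the regular $k$-gon $Q=Q_k$ centered at the origin and inscribed in the unit circle, with $k=\Theta(1/\alpha)$ taken to be a sufficiently large constant multiple of $1/\alpha$; then $d_Q$ is $(\pi/k)$-close to the Euclidean norm, and, by Theorem~\ref{thm:Qnorm} in the quantitative form supplied by its proof (in which the loss in the stability parameter when passing between the Euclidean norm and $d_Q$ is proportional to $\pi/k$), one can choose a threshold $\theta=\Theta(\alpha)$ such that
\begin{itemize}
\item[(a)] every $\theta$-stable Delaunay edge under $d_Q$ is $\alpha$-stable in $\DT(P)$, and
\item[(b)] every $8\alpha$-stable Delaunay edge of $\DT(P)$ is $\theta$-stable under $d_Q$.
\end{itemize}
Fixing the constants hidden in $k$ and $\theta$ is a short calculation; the only point is that the requirement ``$\theta$ large enough for (a)'' is compatible with ``$\theta$ small enough for (b)'' once $k$ exceeds a suitable constant times $1/\alpha$, because the gap between $\alpha$ and $8\alpha$ leaves enough room.

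Run the KDS of~\cite{polydel} to maintain $\DT_Q(P)$ together with the $Q$-Voronoi diagram $\VD_Q(P)$ as the points of $P$ move; this structure has size $O(n)$, processes each of its events in $O(\log n)$ time, and, substituting $k=\Theta(1/\alpha)$ into the bounds of~\cite{polydel}, incurs $O\!\left(\tfrac{1}{\alpha^{4}}n\lambda_r(n)\right)$ events. On top of it, for every edge $pq$ of $\DT_Q(P)$ we install two \emph{stability certificates}, one for $p$ and one for $q$, each asserting that the angle at which the corresponding endpoint sees the common $Q$-Voronoi edge $e^Q_{pq}$ is at least $\theta$; we declare $pq$ a member of $\G$ precisely while both of its certificates hold. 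Properties (a) and (b) now give (S1) and (S2) directly: every edge of $\G$ is $\theta$-stable under $d_Q$, hence $\alpha$-stable in $\DT(P)$; and an $8\alpha$-stable edge of $\DT(P)$ is $\theta$-stable under $d_Q$, in particular it is a $Q$-Delaunay edge, hence it is present in the maintained $\DT_Q(P)$ and both of its certificates hold, so it lies in $\G$. Thus $\G$ is a Euclidean $(8\alpha,\alpha)$-$\SDG$ at every instant.

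To bound the overhead, note that the two endpoints of $e^Q_{pq}$ are $Q$-Voronoi vertices, each tracing an algebraic arc of bounded degree between consecutive events of the~\cite{polydel} KDS, so each of the two ``seeing'' angles attached to $pq$ is, on every inter-event interval, an algebraic function of $t$ of bounded degree, crossing the level $\theta$ only $O(1)$ times there, and its algebraic form changes only at events of the underlying KDS that touch the triple defining one of these two vertices --- of which there are $O(1)$ per event. Charging each certificate failure to the most recent underlying event therefore bounds the number of certificate events by $O(1)$ times the number of events already counted, so the total remains $O\!\left(\tfrac{1}{\alpha^{4}}n\lambda_r(n)\right)$. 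Each event is handled in $O(\log n)$: a topological change of $\DT_Q(P)$, or a combinatorial change of a vertex of $\VD_Q(P)$, affects $O(1)$ edges, so $O(1)$ certificates are recomputed and (re)inserted into the shared event queue in $O(\log n)$ time while $\G$ changes by $O(1)$ edges; a certificate failure inserts or deletes a single edge of $\G$ and reschedules $O(1)$ certificates, again in $O(\log n)$. The total size of the structure --- the polygonal diagram, the $O(n)$ certificates, and the event queue --- is $O(n)$.

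The step I expect to be the crux is the charging argument just sketched: one must verify, using the explicit description in~\cite{polydel} of how $\VD_Q(P)$ evolves under motion and the Davenport--Schinzel bounds behind it, that the events ``$p$ sees $e^Q_{pq}$ at angle exactly $\theta$'' can indeed be charged to events of the polygonal Delaunay KDS with only bounded multiplicity, so that the stability layer does not inflate the asymptotic event count. The rest --- the stability transfer of Theorem~\ref{thm:Qnorm}, linear size, and the $O(\log n)$ update --- is routine once the KDS of~\cite{polydel} is taken as a black box.
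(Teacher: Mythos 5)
Your high-level plan coincides with the paper's: run the KDS of~\cite{polydel} for $\DT^\poly(P)$ with $k=\Theta(1/\alpha)$ and read the Euclidean SDG off the polygonal diagram. But the extraction criterion is genuinely different, and the difference matters for the complexity analysis.

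The paper does not certify the \emph{continuous} polygonal visibility angle as you propose. Instead, Lemma~\ref{lemma:poly-stable} supplies a purely \emph{combinatorial} criterion: an edge $pq$ is placed in $\G$ precisely when its dual $Q$-Voronoi edge $e^\poly_{pq}$ carries at least eleven breakpoints (corner contacts). Via Lemmas~\ref{lemma:LongEucPoly} and~\ref{lemma:LongPolygEuc}, twelve or more breakpoints are guaranteed whenever $pq$ is $8\alpha$-stable Euclidean, and eleven breakpoints force $\alpha$-stability, so $\G$ is an $(8\alpha,\alpha)$-SDG. The point of this choice is that the breakpoint count of every Voronoi edge is \emph{already} a quantity that the KDS of~\cite{polydel} maintains, and a change in that count \emph{is} one of the event types that KDS already detects and processes within its $O(k^4 n\lambda_r(n))$ budget. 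There is therefore no new certificate type, no new event queue entries, and no charging argument: the SDG layer is a stateless filter on data the underlying KDS already exposes, and the theorem's bounds follow immediately.

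Your continuous-angle certificate introduces an extra layer whose cost you then have to charge back to the underlying KDS, and this is where the gap lies. The charging argument ``each certificate failure is charged to the most recent underlying event with bounded multiplicity'' is plausible on the generic pieces of the motion, but it silently assumes the two visibility angles of $e^\poly_{pq}$ vary continuously and piecewise-algebraically. At the \emph{singular events} of~\cite{polydel} --- instants at which some segment $pq$ becomes parallel to an edge of $Q$ and the bisector $b^\poly_{pq}$ momentarily acquires a two-dimensional portion --- $\Omega(n)$ combinatorial changes to $\VD^\poly(P)$ occur simultaneously and the endpoints of the affected Voronoi edges (hence your angle functions) can \emph{jump}. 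Your $O(1)$-crossings-per-piece accounting does not apply across such a jump, and you would need an explicit argument, analogous to Remark~(i) following the theorem in the paper (which shows that only $O(1/\alpha)$ edges can cross the eleven-breakpoint threshold at such an instant), that the angle threshold is crossed by only boundedly many edges per singular event. Your proposal acknowledges the charging step as ``the crux'' but treats it as routine; in fact it is exactly the delicate point that the paper's discrete breakpoint criterion is designed to sidestep. So the proposal is not wrong in spirit, but it is missing the key lemma (\ref{lemma:poly-stable}) and in its place relies on an uncompleted charging argument whose hardest case (singular events) is not addressed.
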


For simplicity, we first prove in Section~\ref{sec:polysdg}
Theorem~\ref{thm:Qnorm} for the case when $Q$ is a regular $k$-gon, for $k \ge 2\pi/\alpha$, and use the
argument to prove Theorem~\ref{Thm:MaintainSDGPolyg}. 
Actually, we prove Theorem~\ref{thm:Qnorm} with a slightly better
constant using the additional structure possessed by the diagrams when 
$Q$ is a regular $k$-gon. Next, we prove
in Section~\ref{sec:Qnorm} Theorem~\ref{thm:Qnorm} for an arbitrary $Q$.
Finally, we prove in Section~\ref{sec:SDGProperties} a few useful properties of 
$\DT(P)$ that are retained by the stable Delaunay graph of $P$.

\section{Preliminaries}
\label{sec:prelim}

This section introduces a few notations, definitions, and known results that we 
will need in the paper.

We represent a direction in $\reals^2$ as a point on the unit circle $\sphere^1$. 
For a direction $u \in \sphere^1$ and an angle $\theta \in [0,2\pi)$, we use $u+\theta$ 
to denote the direction obtained after rotating the vector $\vec{ou}$ by angle $\theta$ in clockwise direction.
For a point $x \in \reals^2$ and a direction $u \in \sphere^1$, let $u[x]$ denote the ray emanating from $x$ in direction $u$.

\paragraph{$Q$-distance function.} 
Let $Q$ be a compact, convex set with non-empty interior and with the origin, denoted by $o$, lying in its interior. A
homothetic copy $Q'$ of $Q$ can be represented by a pair
$(p,\lambda)$, with the interpretation $Q' = p + \lambda Q$; $p$ is
the \emph{placement} (location) of the center $o$ of $Q'$, and
$\lambda$ is its {\em scaling factor} (about its center). $Q$ defines a distance function (also called the \emph{gauge} of $Q$)
$$
d_Q(x,y)=\min \{\lambda\mid y\in x+\lambda Q\}.
$$
Note that, unless $Q$ is centrally symmetric with respect to the origin, $d_Q$ is not symmetric.

Given a finite point set $P\subset\reals^2$ and a point $p\in P$,
we denote by $\Vor^Q(p)$, $\VD^Q(P)$, and $\DT^Q(P)$ the Voronoi cell of $p$,
the Voronoi diagram of $P$, and the Delaunay triangulation of $P$, respectively,
under the distance function $d_Q(\cdot,\cdot)$; see Figure~\ref{fig:VDT}~(b). 
To be precise (because of the potential asymmetry of $d_Q$), we define
$$
\Vor(p) = \{ x\in\reals^2 \mid d_Q(x,p) \le d_Q(x,p') \, \forall p' \in P \},
$$
and then $\VD^Q(P)$ and $\DT^Q(P)$ are defined in complete analogy to the
Euclidean case.  We refer the reader to~\cite{polydel} for formal definitions and 
details of these structures. Throughout this paper, we will drop the superscript 
$Q$ from $\Vor^Q, \VD^Q, \DT^Q$ when referring to them under the Euclidean norm.


\begin{figure}[htb]
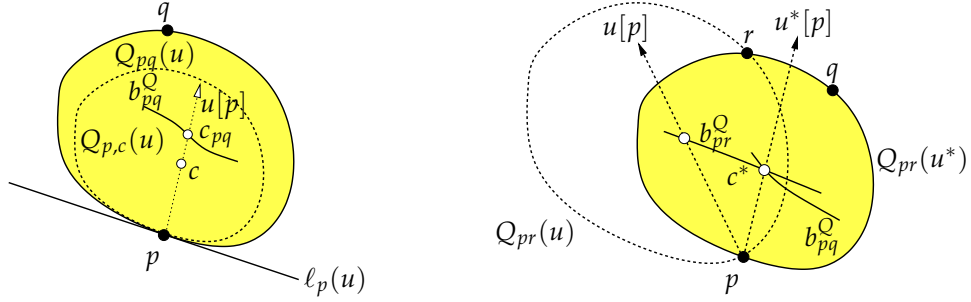

\begin{center}
\input{Qcopies.pspdftex}\hspace{2.5cm}\input{TopologicQ.pspdftex}
\caption{(a) Homothetic copies $Q_{p}(u)$, centered at $c\in
u[p]$ (dashed), and $Q_{pq}(u)$ (solid) of $Q$. The ray $u[p]$ hits
$b^Q_{pq}$ at the center $c_{pq}$ of $Q_{pq}(u)$, which is defined,
when $Q$ is smooth at its contact with $p$,
if and only if $q$ and $Q_p(u)$ lie on the same side of
$\ell_{p}(u)$; (b) Sliding $Q_{pr}(u)$ away from $q$.}
\label{Fig:Qcopies}
\end{center}
\end{figure}

For a point $z\in\reals^2$, let $Q[z]$ denote the homothetic copy of $Q$
centered at $z$ such that its boundary touches the
$Q$-nearest neighbor(s) of $z$ in $P$, i.e., $Q[z]$ is
represented by the pair $(z,\lambda)$ where
$\lambda=\min_{p\in P} d_Q(z,p)$.
In other words, $Q[z]$ is the largest homothetic copy of $Q$ that is
centered at $z$ whose interior is $P$-empty.
We also use the notation $Q_p(u)$ to denote a ``generic''
homothetic copy of $Q$ which touches $p$ and is centered at some
point on $u[p]$. See Figure \ref{Fig:Qcopies}~(a). Note that all
homothetic copies of $Q_p(u)$  touch $p$ at the same point $\zeta$ of $\bd Q$,
and therefore share the same tangent at $p$. This tangent is unique if $Q$ is
smooth at $\zeta$, and we denote it by $\ell_p(u)$. If $Q$ is not smooth at $\zeta$
then there is a nontrivial range of tangents (i.e., supporting lines) to $Q$ at
$\zeta$; we can take $\ell_p(u)$ to be any of them, and it will be a supporting line
to all the copies $Q_p(u)$ of $Q$.

For a pair of points $p,q\in P$, let $b_{pq}^Q$ denote the
\emph{$Q$-bisector} of $p$ and $q$---the locus of all
placements of the center of any homothetic copy $Q'$ of $Q$ that
touches $p$ and $q$. If $Q$ is strictly convex or if $Q$ is not strictly convex but 
no two points are collinear with a straight segment on $\bd Q$, then $b_{pq}^Q$ is a 
one-dimensional curve and any ray $u[p]$ that hits $b_{pq}^Q$ does so at a unique point.
For such a direction $u$ and a pair of points $p,q \in P$, let $Q_{pq}(u)$ denote the homothetic 
copy of $Q$ that touches $p$ and $q$, whose center is $u[p]\cap b_{pq}^Q$.

If $Q$ is not strictly convex and $p,q$ are points in $P$ such that $\vec{pq}$
is parallel to a straight portion $e$ of $\bd Q$ then $b_{pq}^Q$ is not one-dimensional. 
In this case $Q_{pq}(u)$ is not well defined when $u$ is a direction that connects $e$ to the center $c$ of $Q$.
As is easy to check, in any other case the ray $u[p]$ either hits $b_{pq}$ at a unique point which determines $Q_{pq}(u)$, or entirely misses $b_{pq}^Q$. See the companion paper~\cite{polydel} for a detailed discussion of this phenomenon.

A useful property of the $Q$-bisectors is that any two bisectors
$b^{Q}_{pq},b^{Q}_{pr}$ with a common generating point $p$,
intersect exactly once, namely, at the center $c^*$ of the unique
homothetic copy of $Q$ that simultaneously touches $p,q$ and $r$
\cite{LS}. For this property to hold, though, we need to assume
that (i) the points $p,q,r$ are not collinear, and (ii) either $Q$ is strictly convex, or, otherwise, that none of the directions $\vec{pq}, \vec{pr}$ 
is parallel to a straight portion of
$\bd Q$. 
(The precise condition is that $b_{pq}$ and $b_{pr}$ be one-dimensional in a neighborhood of $c^*$.)
The local topology of the restricted $Q$-Voronoi diagram
$\VD^Q(\{p,q,r\})$ near $c^*$ is largely determined by the
orientation of the triangle $\triangle pqr$. Specifically, assume
with no loss of generality that $\vec{pr}$ is counterclockwise to
$\vec{pq}$, and let $u^*$ be the direction of the ray $\vec{pc^*}$.
Refer to Figure \ref{Fig:Qcopies}~(b). 
If we continuously rotate
a ray $u[p]$, for $u\in \sphere^1$, in counterclockwise
direction from $u^*[p]$, the corresponding copy
$Q_{pr}(u)$ will slide away from its contact with $q$
because the portion of $Q_{pr}(u)$ to the right of $\vec{pr}$ shrinks during the rotation.
Therefore, the rotating ray $u[p]$ either
misses $b^Q_{pq}$ entirely or hits $b^Q_{pq}$ after $b^Q_{pr}$. A
symmetric phenomenon, with $q$ and $r$ interchanged, takes place if
we rotate the ray $u[p]$ in clockwise direction from $u^*[p]$. 

It is known that $\Vor^Q(p)$, for every point $p \in P$, is star-shaped~\cite{AKL}, which 
implies that each Voronoi edge $e^Q_{pq}$ is fully enclosed
between the two rays that emanate from $p$, or from $q$, through its
endpoints. We remark that, unlike the Euclidean case, the angles
$\angle xpy$, $\angle xqy$ need not be equal in general.

Finally, we extend the notion of stable edges to $\DT^Q(P)$. 
We call an edge $pq \in \DT^Q(P)$
\emph{$\alpha$-stable} if the following property holds for
the dual edge $e_{pq}^Q$ in $\VD^Q(P)$:
\begin{quote}
\textit{Each of the points $p,q$ sees their common $Q$-Voronoi edge
$e_{pq}^Q$ at angle at least $\alpha$. That is, if $x$ and $y$
are the endpoints of $e_{pq}^Q$, then 
$\min\{\angle xpy, \angle xqy\} \ge \alpha$.}
\end{quote}
This definition coincides with the definition of
$\alpha$-stability under the Euclidean norm when $Q$ is the unit disk
(and in this case both angles are equal).

\paragraph{Remark.} If $Q$ is not strictly convex (and $pq$ is parallel to a straight portion of $\partial Q$), the endpoints of $e_{pq}^Q$ may be not well defined.
In this case, we resort to the following, more careful definition of $\alpha$-stability.
A ray $u[p]$ is said to (properly) {\it cross} $e_{pq}^Q$ only if the copy $Q_{pq}(u)$ is uniquely defined. The center of such a copy $Q_{pq}(u)$ necessarily lies within the one-dimensional portion $\tilde{e}_{pq}^Q$ of $Q_{pq}(u)$, which is easily seen to be non-empty and connected. 
We say that the edge $e_{pq}^Q$ is $\alpha$-stable if the set of rays $u[p]$ properly crossing $e_{pq}^Q$ spans an angle of at least $\alpha$, and a symmetric condition holds for the rays emanating from $q$.
In other words, our notion of $\alpha$-stability ignores the two-dimensional regions of $e_{pq}^Q$ (if these exist).\footnote{As is easy to check, the one-dimensional portion $\tilde{e}_{pq}^Q$ of $e^Q_{pq}$ varies continuously (in Hausdorff sense) with any sufficiently small perturbation of $p$ and $q$ within $P$. Furthermore, it is the {\it only} such portion: If a ray $u[p]$ hits $e^Q_{pq}$ outside $\tilde{e}_{pq}^Q$ (i.e., within its two-dimensional portion), there is a symbolic perturbation of $p$ and $q$ causing $u[p]$ to completely miss $e_{pq}^Q$.}

\paragraph{Polygonal convex distance function.} 
As mentioned in the introduction, we will be considering the case when $Q$ is a regular $k$-gon, for some 
even integer $k \ge 2\pi/\alpha$, centered at the origin. Let 
$v_0,\ldots,v_{k-1}$ be its sequence of vertices arranged in clockwise direction. 
For each $0\leq j< k$, let $u_j$ be the direction of the vector that 
connects $v_j$ to the center of $Q$ (see Figure~\ref{Fig:bisect}~(a)).  
We will use $b^\poly_{pq}, \Vor^\poly(p), \VD^\poly(P), \DT^\poly(P)$ to 
denote $b^Q_{pq}, \Vor^Q(p), \VD^Q(P), \DT^Q(P)$, respectively, when $Q$ 
is a regular $k$-gon. 

We say that $P$ is in \emph{general position} (with respect to $Q$) if no
three points of $P$ lie on a line, no two points of $P$ lie on a
line parallel to an edge or a diagonal of $Q$, and no four points of
$P$ are \emph{$Q$-cocircular}, i.e., no four points of $P$ lie on
the boundary of a common homothetic copy of $Q$.

\begin{figure}[htb]
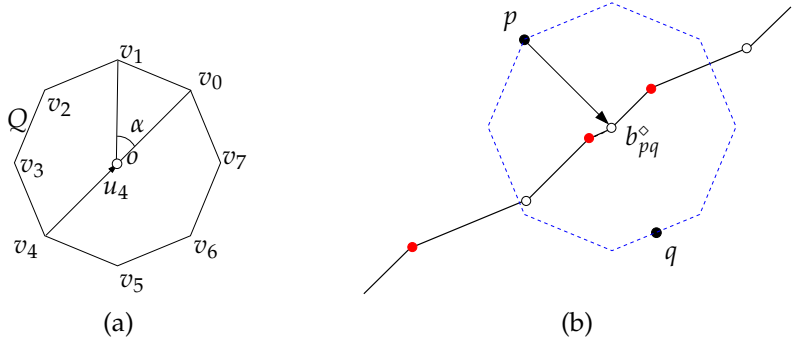

\centering
\begin{tabular}{ccc}
\input{convexkgon.pspdftex}&\hspace*{10mm}&
\input{bisect.pspdftex}\\
\small(a)&&\small(b)
\end{tabular}
\caption{(a) A regular octagon $Q$ centered at the origin. 
(b) The bisector $\bisect_{pq}^\poly$ for the
regular octagon $Q$; it has six breakpoints and the corner contacts
along $\bisect_{pq}^\poly$ alternate between contacts at $p$ (hollow
circles) and contacts at  $q$ (filled circles).} 
\label{Fig:bisect}
\end{figure}

The placements on $\bisect_{pq}^\poly$ at which
(at least) one of $p$ and $q$, say, $p$, touches $Q'$ at
a vertex is called a \emph{corner placement} (or
a \emph{corner contact}) at $p$; see Figure~\ref{Fig:bisect}~(b).
We also refer to these points on $\bisect_{pq}^\poly$ as
\emph{breakpoints}. 
We call a homothetic copy of $Q$ whose vertex $v_j$ touches a point $p$, a
{\em $v_j$-placement of $Q$ at $p$}. 

The following property of $\bisect_{pq}^\poly$ is proved in~\cite[Lemma~2.5]{polydel}:
\begin{lemma} \lemlab{interleave}
Let $Q$ be a regular $k$-gon, and let $p$ and $q$ be two points
in general position with respect to $Q$. 
Then $b^\poly_{pq}$ is a polygonal chain with $k-2$ breakpoints and the 
breakpoints along $b_{pq}^\poly$ alternate between corner contacts at $p$ and corner contacts at $q$.
\end{lemma}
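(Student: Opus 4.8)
The plan is to reparametrise $b^\poly_{pq}$ by the fixed direction $d=\widehat{pq}$ of the segment $pq$ and to read off both the count and the alternation of the breakpoints from the combinatorics of a regular polygon. A placement $c$ lies on $b^\poly_{pq}$ iff the homothet $c+\lambda Q$ with $\lambda=d_Q(c,p)=d_Q(c,q)$ touches $p$ at $c+\lambda a$ and $q$ at $c+\lambda b$ for some $a,b\in\partial Q$; since $b-a=(q-p)/\lambda$, the ordered pair $(a,b)$ is a chord of $Q$ in direction $d$, i.e.\ $b-a\in\reals_{>0}\,d$. Fix a unit vector $d^\perp$ orthogonal to $d$. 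The two endpoints of such a chord share a common ``height'' $y:=\langle a,d^\perp\rangle=\langle b,d^\perp\rangle$, and — since, by general position, $d$ is parallel to no edge of $Q$ — for each $y$ strictly between the extreme heights $y_{\min}<y_{\max}$ there is exactly one such chord, with endpoints $a=a(y)$ (the one further in the $-d$ direction) and $b=b(y)$. The map $y\mapsto c(y):=p-\frac{|pq|}{|b(y)-a(y)|}\,a(y)$ is easily seen to be a homeomorphism of the open interval $(y_{\min},y_{\max})$ onto $b^\poly_{pq}$; as $y$ grows, $a(y)$ sweeps monotonically the ``back arc'' $\Gamma$ of $\partial Q$ (the locus of boundary points whose outward normal has non-positive inner product with $d$) from the lowest vertex $\sigma_2$ to the highest vertex $\sigma_1$, while $b(y)$ sweeps the complementary ``front arc'' from $\sigma_2$ to $\sigma_1$.

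With this parametrisation the structure of $b^\poly_{pq}$ is transparent. The placement $c(y)$ is a corner contact at $p$ (resp.\ at $q$) exactly when $a(y)$ (resp.\ $b(y)$) is a vertex of $Q$; on any $y$-interval avoiding all corner contacts, $p$ and $q$ lie on two fixed edges of the homothets, so there the condition $d_Q(c,p)=d_Q(c,q)$ is linear in $c$ and $b^\poly_{pq}$ is a straight segment. Hence $b^\poly_{pq}$ is a polygonal chain whose breakpoints are precisely the corner contacts. Since $Q$ is centrally symmetric, $\sigma_1$ and $\sigma_2$ are antipodal vertices, and (again because no edge of $Q$ is parallel to $d$) each of $\Gamma$ and the front arc consists of exactly $k/2$ edges and therefore contains exactly $k/2-1$ vertices of $Q$ in its relative interior. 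Thus there are $k/2-1$ corner contacts at $p$ and $k/2-1$ at $q$, i.e.\ $k-2$ breakpoints altogether, and, because $a(y)$ lies on $\Gamma$ and $b(y)$ on the front arc throughout, the corner contacts at $p$ correspond bijectively to the relative-interior vertices of $\Gamma$ and those at $q$ to the relative-interior vertices of the front arc.

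It remains to prove alternation, and since $y$ parametrises the chain monotonically this amounts to the following purely geometric claim about $Q$: \emph{the $k/2-1$ heights of the relative-interior vertices of $\Gamma$ and the $k/2-1$ heights of those of the front arc interleave on the real line.} Write $v_\ell=R(\cos\psi_\ell,\sin\psi_\ell)$ with $\psi_\ell=\psi_0+2\pi\ell/k$, let $\phi$ be the angle of $d^\perp$, and index so that $v_0$ is the (unique, by general position) highest vertex; then $\psi_0-\phi=\delta$ with $|\delta|<\pi/k$, and the two arcs have relative-interior vertices $v_1,\dots,v_{k/2-1}$ and $v_{-1},\dots,v_{-(k/2-1)}$, of heights $h_m:=R\cos(\delta+2\pi m/k)$ and $h_{-m}:=R\cos(\delta-2\pi m/k)$ ($m=1,\dots,k/2-1$), each sequence strictly decreasing in $m$. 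General position forces $\delta\neq0$ (see below), so after replacing $d^\perp$ by $-d^\perp$ if necessary we may take $\delta\in(0,\pi/k)$. The two families of inequalities $h_{-m}>h_m$ and $h_m>h_{-(m+1)}$ then hold, because $h_{-m}-h_m=2R\sin\delta\,\sin(2\pi m/k)>0$ and $h_m-h_{-(m+1)}=-2R\sin(\delta-\pi/k)\,\sin\bigl(\pi(2m+1)/k\bigr)>0$, the signs being determined by $\sin\delta>0$, $\sin(\delta-\pi/k)<0$, and the positivity of $\sin(2\pi m/k)$ and $\sin(\pi(2m+1)/k)$ in the relevant ranges. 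These chain up to the total order $h_{-1}>h_1>h_{-2}>h_2>\cdots>h_{-(k/2-1)}>h_{k/2-1}$, an exact alternation between the two arcs, hence between corner contacts at $p$ and at $q$, which is the assertion of the lemma.

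The step I expect to require the most care is the reduction in the first two paragraphs — verifying that the chord/height map is genuinely a homeomorphism onto all of $b^\poly_{pq}$, and pinning down $\Gamma$ and the front arc correctly given that $Q$ is not smooth — together with the book-keeping that general position really does exclude $\delta=0$: one checks that $\delta=0$ says precisely that $d$ is parallel to the short diagonal $v_{-1}v_1$ of $Q$, which general position forbids. Once these are settled, the counting in the second paragraph and the trigonometric alternation in the third are routine.
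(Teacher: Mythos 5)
Your proposal looks correct to me, and it is self-contained. Note, however, that this paper does not actually prove the lemma here---it simply cites Lemma~2.5 of the companion paper~\cite{polydel}---so I cannot compare your argument to ``the paper's proof'' directly. On its own terms your argument is a natural one: you reparametrise $b^\poly_{pq}$ by the signed height $y$ (equivalently, by the chords of $Q$ in direction $\widehat{pq}$), observe that the bisector is linear on the maximal $y$-intervals free of corner contacts and that the back/front contact points sweep the two arcs monotonically, and then reduce the alternation claim to the interleaving of the vertex heights $\{R\cos(\delta+2\pi m/k)\}$ and $\{R\cos(\delta-2\pi m/k)\}$, which you verify by product-to-sum identities using $0<\delta<\pi/k$. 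I checked the trigonometric computations and they are right.

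Three small points worth flagging. (1) Your counting uses central symmetry of $Q$ (``$\sigma_1$ and $\sigma_2$ are antipodal''), i.e., $k$ even. The paper does work with $k$ even throughout, so this is consistent with context, but the lemma statement does not say so explicitly; for odd $k$ the two arcs have $\lfloor k/2\rfloor$ and $\lceil k/2\rceil-1$ interior vertices and the argument needs a small adjustment. (2) You establish that between consecutive corner contacts the bisector is a line segment, and then conclude that the corner contacts are ``precisely'' the breakpoints, but you do not verify that each corner contact is a \emph{genuine} breakpoint, i.e., that the two adjacent segments are not collinear. This does hold: each segment lies on a line of the form $\langle n_b-n_a,c\rangle=\text{const}$, and the three unit normals involved across a corner contact are pairwise distinct points on the unit circle, hence not collinear, so the two segment directions $n_b-n_a$ and $n_b-n_a'$ (say) are not parallel. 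A one-line remark would close this. (3) The ``replacing $d^\perp$ by $-d^\perp$'' step to force $\delta>0$ does not, by itself, negate $\delta$ (it replaces $v_0$ by the antipodal vertex and leaves $\delta$ unchanged); the correct WLOG is that reflecting across the line through $o$ in direction $d^\perp$ negates $\delta$ and swaps the roles of the two arcs, which preserves the alternation statement. Harmless, but the phrasing is slightly off.
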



\section{$\mathbf{DT^\poly(P)}$ and Euclidean $\SDG$'s}
\label{sec:polysdg}

In this section we first prove a slightly stronger version of 
Theorem~\ref{thm:Qnorm} for the case when $Q$ is a regular $k$-gon for some even
integer $k\ge 2\pi/\alpha$, and then prove Theorem~\ref{Thm:MaintainSDGPolyg}.
We follow the notation in Section~\ref{sec:prelim}, and, for simplicity, we assume that $\alpha=2\pi/k$.

\begin{figure}[htb]
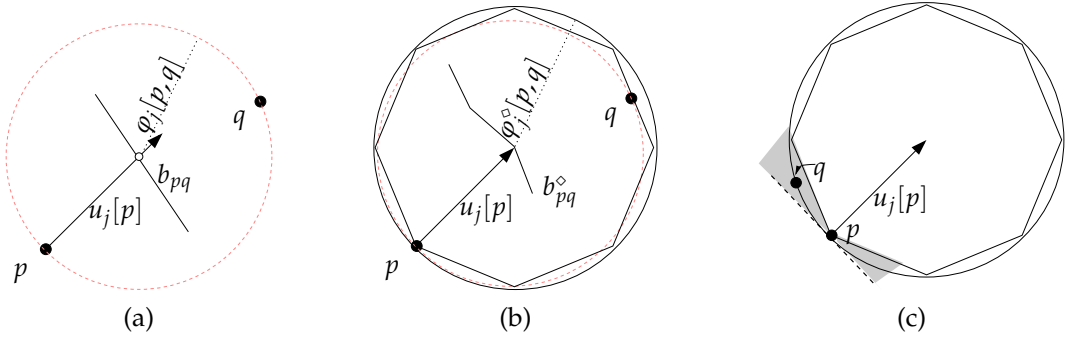

\centering
\begin{tabular}{ccccc}
\input{Euclphi.pspdftex}&\hspace*{5mm}&
\input{PolygonalDist.pspdftex}&\hspace*{5mm}&
\input{UndefinedPolygDist.pspdftex}\\
\small (a) && \small (b)&& \small (c)
\end{tabular}
\caption{(a)  The function $\ph_j[p,q]$, which is equal to the radius of the circle that pass through $p$ and $q$ and whose center lies on $u_j[p]$. 
(b) The bisector $\bisect^\poly_{pq}$ and the function
$\varphi_j^\poly[p,q]$, which is equal to the radius of the circle
that circumscribes the $v_j$-placement of $Q$ at $p$ that also
touches $q$. (c) The case when $\varphi^\poly_j[p,q]=\infty$ while
$\varphi_j[p,q]<\infty$. In this case $q$ lies in one of the shaded
wedges.} 
\label{Fig:Placement}
\end{figure}

For any pair $p,q\in P$, let $\ph_j[p,q]$ denote the distance
from $p$ to the point $u_j[p]\cap \bisect_{pq}$; we put
$\varphi_j[p,q]=\infty$ if $u_j[p]$ does not intersect
$\bisect_{pq}$. See Figure~\ref{Fig:Placement}~(a).
The point $q$ minimizes $\distfn_i[p,q']$,
among all points $q'$ for which $u_i[p]$ intersects $\bisect_{pq'}$,
if and only if the intersection between $\bisect_{pq}$ and $u_i[p]$
lies on the Voronoi edge $e_{pq}$. We call $q$ the
{\em neighbor of $p$ in direction $u_i$}, and denote it by $\Nbrs_i(p)$.

Similarly, let $\varphi^\poly_j[p,q]$ denote the distance
from $p$ to the point $u_j[p]\cap \bisect^\poly_{pq}$; we put
$\varphi^\poly_j[p,q]=\infty$ if $u_j[p]$ does not intersect
$\bisect^\poly_{pq}$. If $\varphi^\poly_j[p,q]<\infty$ then the point
$\bisect^\poly_{pq}\cap u_j[p]$ is the center of the $v_j$-placement
$Q'$ of $Q$ at $p$ that also touches $q$, and there is a unique such point.
The value $\varphi^\poly_j[p,q]$ is equal to the circumradius of $Q'$.
See Figure \ref{Fig:Placement}~(b).
The \emph{neighbor} $\Nbrs^\poly_j[p]$ of $p$ in direction $u_j$ is defined
to be the point $q\in P\setminus\{p\}$ that minimizes $\varphi^\poly_j[p,q]$.

Note that $\varphi_j[p,q]<\infty$ if and only if the angle between $\vec{pq}$ and $u_j[p]$ is 
smaller than $\pi/2$.  In contrast, $\varphi^\poly_j[p,q]<\infty$ if and only if
the angle between $\vec{pq}$ and $u_j[p]$ is at most
$\pi/2-\pi/k=\pi/2-\alpha/2$. Moreover, we have $\varphi_j[p,q]\leq \varphi^\poly_j[p,q]$ 
(see Figure~\ref{Fig:Placement}). Therefore, $\varphi^\poly_j[p,q]<\infty$ always
implies $\varphi_j[p,q]<\infty$, but not vice versa; see
Figure~\ref{Fig:Placement}~(c). Note also that in both the Euclidean and
the polygonal cases, the respective quantities $N_j[p]$ and $N_j^\poly[p]$
may be undefined.

\begin{lemma}\label{lemma:LongEucPoly}
Let $p,q\in P$ be a pair of points such that $\Nbrs_j(p)=q$ for
$h\geq 3$ consecutive indices, say $0\leq j\le h-1$.
Then for each of these indices, except possibly for the first and
the last one, we also have $\Nbrs^\poly_j[p]=q$.
\end{lemma}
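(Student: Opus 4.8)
The plan is to compare the two "neighbor" notions by tracking, for each relevant direction $u_j$, how the Euclidean circumradius $\varphi_j[p,q]$ and the polygonal circumradius $\varphi^\poly_j[p,q]$ relate along the bisectors $\bisect_{pq}$ and $\bisect^\poly_{pq}$. Recall that $q = \Nbrs_j(p)$ means precisely that $u_j[p]$ hits $\bisect_{pq}$ on the Voronoi edge $e_{pq}$, equivalently that the open disk $D_j$ bounded by the Euclidean circle of radius $\varphi_j[p,q]$ centered at $u_j[p]\cap\bisect_{pq}$ is $P$-empty. Dually, $q = \Nbrs^\poly_j[p]$ means the $v_j$-placement $Q_j$ of $Q$ at $p$ that also touches $q$ is $P$-empty. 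So the crux is: given that $D_{j-1},D_j,D_{j+1}$ are all $P$-empty Euclidean disks through $p$ and $q$, show that the polygonal copy $Q_j$ is also $P$-empty (and, secondarily, that $Q_j$ is well defined, i.e. $\varphi^\poly_j[p,q]<\infty$).

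First I would handle definedness. Since $q=\Nbrs_j(p)$ for $j=0,\dots,h-1$, the ray $u_j[p]$ meets $\bisect_{pq}$, so the angle between $\vec{pq}$ and $u_j[p]$ is less than $\pi/2$ for every such $j$; because these $h\ge 3$ directions are consecutive multiples of $\alpha = 2\pi/k$ apart, for any index $j$ strictly between $0$ and $h-1$ both neighbors $u_{j-1}[p]$ and $u_{j+1}[p]$ also make an angle $<\pi/2$ with $\vec{pq}$, which forces the angle between $\vec{pq}$ and $u_j[p]$ to be at most $\pi/2 - \alpha/2$ (it sits at least $\alpha$ away from the "perpendicular" boundary on both sides). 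By the criterion quoted just before the lemma, this gives $\varphi^\poly_j[p,q] < \infty$, so $Q_j$ is well defined for the interior indices. Next, the key geometric containment: I claim the polygonal copy $Q_j$ is sandwiched inside the union $D_{j-1}\cup D_{j+1}$ of the two neighboring Euclidean disks, or more precisely inside $D_{j-1}\cup D_j \cup D_{j+1}$. The point is that $Q_j$, being a convex polygon inscribed in (a scaled copy of) $Q$ with one vertex at the corner direction $u_j$, is "thin" in the angular sense: every supporting direction of $\bd Q$ at the arc near $v_j$ is within $\alpha$ of $u_j$, and a homothet of $Q$ through $p$ whose $v_j$-vertex direction is $u_j$ is contained in the Euclidean disk through $p$ and $q$ whenever the latter's center-direction from $p$ is within roughly $\alpha$ of $u_j$. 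Combined with $\varphi_j[p,q]\le\varphi^\poly_j[p,q]$ and the monotonic "sliding" behavior of bisector-points as the ray direction rotates (the interleaving/monotonicity facts from Section~\ref{sec:prelim} and \lemref{interleave}), one pins $Q_j$ between the two Euclidean witnesses $D_{j-1}$ and $D_{j+1}$, each of which is $P$-empty.

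Granting the containment $Q_j \subseteq D_{j-1}\cup D_j\cup D_{j+1}$, emptiness is immediate: all three disks are $P$-empty since $q=\Nbrs_j(p)$ for the consecutive indices, hence $Q_j$ contains no point of $P$ in its interior except $p,q$ on its boundary, so $q$ minimizes $\varphi^\poly_j[p,\cdot]$ and $\Nbrs^\poly_j[p]=q$. The main obstacle I anticipate is making the containment claim precise: one must quantify exactly how much a $v_j$-placement of a regular $k$-gon through $p$ can "bulge" relative to the Euclidean disks through $p$ and $q$ whose center-rays from $p$ are the adjacent corner directions $u_{j\pm1}$, and verify that $\alpha = 2\pi/k$ is a small enough angular step that this bulge stays within the swept union of the neighboring disks. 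This is where the regular-$k$-gon structure (all edges subtending angle $\alpha$ at the center, vertex directions exactly the $u_j$) is used, and where the requirement that $j$ not be the first or last of the run enters, since we genuinely need \emph{both} neighbors $u_{j-1},u_{j+1}$ to be "good" Euclidean directions for $q$. A careful case analysis of where the extreme points of $Q_j$ lie relative to the circles bounding $D_{j-1}$ and $D_{j+1}$, using that all these circles and $Q_j$ pass through the common chord $pq$, should close the gap.
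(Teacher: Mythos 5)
Your overall plan matches the paper's: establish definedness of $Q_j := Q_{pq}(u_j)$ via the angle-with-$\vec{pq}$ criterion, and then show that $Q_j$ is contained in a union of $P$-empty Euclidean disks on neighboring rays, so that its interior is $P$-empty and $q$ minimizes $\varphi^\poly_j[p,\cdot]$. The definedness part of your argument is essentially correct and is what the paper does.

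However, the containment claim --- the real content of the lemma --- is left unproven, and the intermediate assertion you offer in its support is false. You claim that a homothet of $Q$ through $p$ with $v_j$-direction $u_j$ ``is contained in the Euclidean disk through $p$ and $q$ whenever the latter's center-direction from $p$ is within roughly $\alpha$ of $u_j$.'' This cannot hold: $Q_j$ is inscribed in a circumscribing circle $D^+$ that \emph{strictly contains} the disk $D_j = D_{pq}(u_j)$, and the vertices of $Q_j$ lie on $\bd D^+$, outside $D_j$. Thus $Q_j$ is not contained in any single disk $D_{pq}(u_{j'})$ with $u_{j'}$ close to $u_j$; it can only be trapped in a \emph{union} of two such disks whose rays straddle $u_j$, and showing this takes work. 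You acknowledge this (``the main obstacle I anticipate is making the containment claim precise... A careful case analysis... should close the gap''), but that is precisely the nontrivial part, and your proposal does not supply it.

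The paper closes this gap with a separate lemma (Lemma~\ref{lemma:polynorm-empty}). It fixes the two extreme disks $D_1, D_2$ centered at $u_0[p]\cap e_{pq}$ and $u_{h-1}[p]\cap e_{pq}$ (not the disks immediately adjacent to $u_j$), so the tangents to $D_1, D_2$ at $p$ and at $q$ make angle $\beta = (h-1)\alpha$ with each other. It then looks at the edge $e'$ of $Q_j$ through $q$, compares it with the chord $qq' = \ell' \cap D_j$, and uses the inscribed-angle fact that $p$ sees $e'$ at angle exactly $\alpha/2$. Combining this with the $\geq\alpha$ angular gap between $u_j$ and $u_0, u_{h-1}$ yields that $e'$ makes angle at least $\alpha/2$ with the tangents to $D_1, D_2$ at $q$, hence $e'\subset D_1\cup D_2$. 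Finally, a circle-crossing argument with the circumscribing disk $D^+$ of $Q_j$ shows the arc of $\bd D^+$ containing $p$ is inside $D_1\cup D_2$, so all vertices of $Q_j$ are too, giving $Q_j\subset D_1\cup D_2$. None of this is present in your proposal; without some version of it, the proof is incomplete.
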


\begin{proof}
Let $w_1$ (resp., $w_2$) be the point at which the ray $u_0[p]$
(resp., $u_{h-1}[p]$) hits the edge $e_{pq}$ in $\VD(P)$.
(By assumption, both points exist.) Let $D_1$ and $D_2$ be the disks
centered at $w_1$ and $w_2$, respectively, and touching $p$ and $q$.
By definition, neither of these disks contains a point of $P$ in its
interior. The angle between the tangents to $D_1$ and $D_2$ at $p$ or
at $q$ (these angles are equal) is $\beta=(h-1)\alpha$;
see Figure \ref{Fig:ProvePolyg}~(a).

\begin{figure}[htbp]
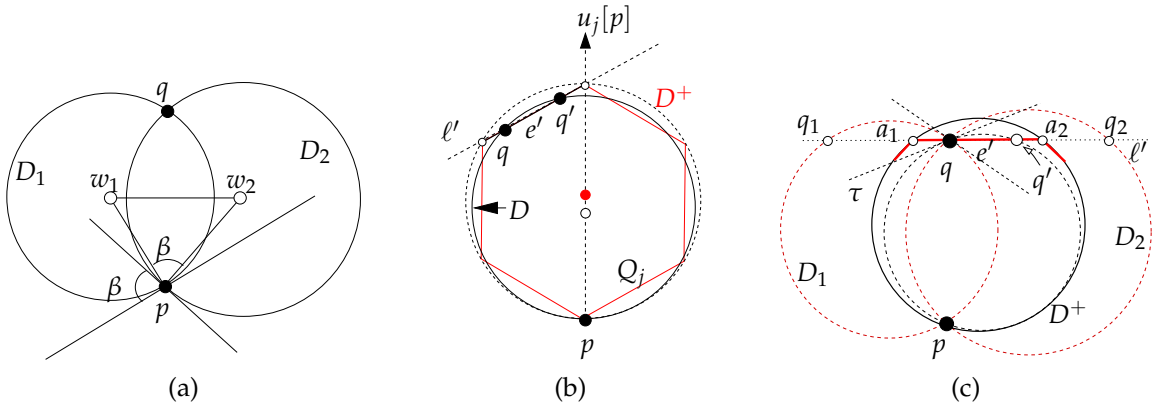

\centering
\begin{tabular}{ccccc}
\input{ProvePolyg1.pspdftex}&\hspace*{2mm}&
\input{ProvePolyg2.pspdftex}&\hspace*{2mm}&
\input{ProvePolyg3.pspdftex}\\
\small (a)&&\small (b)&& \small (c)
\end{tabular}
\caption{(a) The angle between the tangents to $D_1$ and $D_2$ at
$p$ (or at $q$) is equal to $\angle w_1pw_2= \beta=(h-1)\alpha$. (b)
The disks $D$ and $D^+$ and the homothetic copy $Q_j$ of $Q$; $\ell'
\cap D =qq' \subseteq e'$. (c) $\ell'$ forms an angle of at least
$\alpha/2$ with each of the tangents to $D_1,D_2$ at $q$, and  the
edge $e'=a_1a_2 \subset D_1\cup D_2$.} 
\label{Fig:ProvePolyg}
\end{figure}

Fix an arbitrary index $1\leq j\leq h-2$, so $u_j[p]$ intersects
$e_{pq}$ and forms an angle of at least $\alpha$ with each of
${pw}_1,{pw}_2$. Let $Q_j:=Q_{pq}(u_j)$ be the $v_j$-placement 
of $Q$ at $p$ that touches $q$. To see that such a placement exists, we note that, by
the preceding remark, it suffices to show that the angle between
$\vec{pq}$ and $u_j[p]$ is at most $\pi/2-\alpha/2$; that is, to
rule out the case where $q$ lies in one of the shaded wedges in
Figure~\ref{Fig:Placement}~(c). This case is indeed impossible,
because then one of $u_{j-1}[p],u_{j+1}[p]$ would form an angle
greater than $\pi/2$ with $\vec{pq}$, contradicting the assumption
that both of these rays intersect the (Euclidean) $\bisect_{pq}$.
The claim now follows from the next lemma, which shows that $Q_j
\subset D_1\cup D_2$, which implies that $\intr Q_j \cap P
=\emptyset$ and thus $\Nbrs_j^\poly[p]=q$, as claimed.
\end{proof}

\begin{lemma}
\lemlab{polynorm-empty}
In the notation in the proof of Lemma~\ref{lemma:LongEucPoly},
$Q_j \subset D_1\cup D_2$, for $1 \le j \le h-2$.
\end{lemma}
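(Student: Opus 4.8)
The plan is to show that the polygonal copy $Q_j = Q_{pq}(u_j)$ is sandwiched inside the union of the two empty disks $D_1$ and $D_2$, both of which pass through $p$ and $q$. First I would set up coordinates so that the line through $p$ and $q$ is convenient, and recall the two salient facts: $Q_j$ is a homothet of the regular $k$-gon that touches both $p$ and $q$ with $v_j$ at $p$; and the rays $u_0[p], u_{h-1}[p]$ (through the centers of $D_1, D_2$) make an angle $\beta = (h-1)\alpha$ as seen from $p$ (and from $q$), while $u_j[p]$, for $1 \le j \le h-2$, lies strictly between them, separated by at least $\alpha$ from each. The natural decomposition is to cut $Q_j$ by the chord $pq$ into its portion $Q_j^+$ on the $D_2$-side (say the side containing $r^+$) and its portion $Q_j^-$ on the other side, and to argue $Q_j^+ \subseteq D_2$ and $Q_j^- \subseteq D_1$ (or the appropriate assignment, depending on which side each disk's center leans toward relative to $u_j$).

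The key geometric step is a containment claim for a single polygonal cap against a single disk. Concretely: if a homothetic copy $Q'$ of the regular $k$-gon touches $p$ and $q$, and $D$ is a disk through $p$ and $q$ whose center lies on the same side of $pq$ as a given half $Q'^+$ of $Q'$, then that half $Q'^+$ is contained in $D$ provided the ``opening angle'' of $D$ at $p$ (the angle $\angle$ between $pq$ and the tangent to $D$ at $p$, equivalently the inscribed-angle data) dominates the corresponding opening of $Q'^+$. I would make this precise by comparing boundary arcs: walk along $\partial Q'$ from $p$ to $q$ through $Q'^+$; at each vertex or edge, the supporting line of $Q'$ turns by exactly $\alpha = 2\pi/k$, and one shows the polygonal boundary stays inside the circular arc of $\partial D$ from $p$ to $q$ because the circular arc's tangent direction, parametrized by arc position, turns faster (or the right monotonicity of the turning compared with the chord subtension). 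The relevant quantitative input is exactly that $u_j[p]$ is $\ge \alpha$ away from $u_0[p]$ on the $D_1$-side and $\ge \alpha$ away from $u_{h-1}[p]$ on the $D_2$-side, which translates into the tangent of $Q_j$ at $p$ lying within the wedge determined by the tangents of $D_1$ and $D_2$ at $p$, with $\alpha/2$ slack on each side (half of $\alpha$ coming from the fact that the $v_j$-contact tangent of the $k$-gon at $p$ is the bisector of the two edges meeting at $v_j$, each edge normal differing from $u_j$ by $\alpha/2$). So the slack $\alpha/2$ between the $k$-gon's supporting line at $p$ and the ray $u_j$ is precisely absorbed by having full angle $\alpha$ of play on each side.

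The main obstacle I anticipate is handling the corner contacts of $Q_j$ carefully — the $v_j$-placement touches $p$ at a vertex, so near $p$ the boundary of $Q_j$ is not smooth and one must compare two edges of $Q_j$ emanating from $p$ (in directions differing from the nominal tangent $\ell_p(u_j)$ by $\pm\alpha/2$) against the smooth circular arcs of $D_1$ and $D_2$, and similarly at $q$ where $Q_j$ might touch $q$ either at a vertex or along an edge. The cleanest route is probably: (1) show the tangent lines of $D_1$ and $D_2$ at $p$ bound a wedge $W_p$ of opening $\beta$ symmetric about $pq$'s perpendicular bisector direction, and that both edges of $Q_j$ at $p$ lie inside $W_p$ (this uses $1 \le j \le h-2$, i.e., $u_j$ is at least $\alpha$ from each extreme, and the $\alpha/2$ half-edge offset); symmetrically at $q$; (2) conclude that the entire boundary $\partial Q_j$ between $p$ and $q$ on each side stays below the corresponding circular arc, using convexity of $Q_j$ together with the monotone turning of the circular boundary. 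I would phrase step (2) as: a convex polygonal arc from $p$ to $q$ whose initial and terminal tangent rays both point into the region bounded by a circular arc from $p$ to $q$ (on the appropriate side) must stay within that region, because the circular arc is itself convex and any excursion outside would force the polygonal arc's tangent to rotate past the circular arc's, contradicting the endpoint tangent conditions plus the total-turning comparison ($Q_j$'s boundary turns by $\pi - \alpha$ worth on a side with $\le h-2$ edges of turn $\alpha$ each, while the disk boundary turns by the inscribed-angle amount $\ge \beta = (h-1)\alpha$). A short continuity/extremal argument then finishes it: if $Q_j \not\subseteq D_1 \cup D_2$, take a point of $\partial Q_j$ farthest outside; it must lie strictly on one side of $pq$, hence outside the corresponding disk, and a local tangent-line comparison at the two ends of that side's arc yields a contradiction.
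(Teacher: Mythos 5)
Your approach is genuinely different from the paper's. The paper does not cut $Q_j$ along the chord $pq$; instead it singles out the edge $e'$ of $Q_j$ through $q$, compares the chords $q_1q$, $qq_2$ that $\ell'=\overline{e'}$ cuts out of $D_1$, $D_2$ against $e'$ itself (using the tangent--chord angle), and then works with the \emph{circumscribing} disk $D^+$ of $Q_j$: it locates the second intersection points $\partial D^+\cap\partial D_1$ and $\partial D^+\cap\partial D_2$ and shows they lie on rays from $p$ crossing $e'$, so the arc of $\partial D^+$ through $p$ (which carries all the vertices of $Q_j$) sits inside $D_1\cup D_2$. Together with $e'\subset q_1q_2$ this gives the containment.

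The central step of your plan, however, is not correct as stated, and this is a genuine gap rather than a presentational one. You invoke the principle that a convex polygonal arc $\gamma$ from $p$ to $q$ must lie inside the cap of a disk $D$ through $p,q$ provided the initial and terminal tangent rays of $\gamma$ point into that cap, together with a ``total--turning comparison.'' This is false. Take $D$ the unit disk, $p=(-1,0)$, $q=(1,0)$, and let $\gamma$ be the two-segment convex arc $p\to(0,\tan 80^\circ)\to q$. The tangent rays of $\gamma$ at $p$ and $q$ make angle $80^\circ$ with $pq$ and point strictly into the upper cap of $D$ near $p$ and near $q$ (the tangent of $D$ there is vertical, so the inclusion is with $10^\circ$ of slack), the arc is convex, and its total turning is $160^\circ$, \emph{less} than the $180^\circ$ turn of the semicircular arc of $\partial D$ from $p$ to $q$. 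Yet $\gamma$ rises to height $\tan 80^\circ\approx 5.7$, far outside $D$. So endpoint tangent-cone conditions plus a global turning bound do not control how far a convex cap bulges; one also needs a bound on the circumscribed radius, which is exactly the extra information the paper injects through $D^+$. Relatedly, the weaker inscribed-angle inequality $\angle pxq\ge \pi-\phi_p-\phi_q$ that one could try to use in place of your turning argument is not sharp enough to land $Q_j^{\pm}$ inside the appropriate disk cap either, for the same reason.

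A secondary, smaller issue is the decomposition itself: $D_1$ and $D_2$ both have $pq$ as a chord and straddle it, and the Voronoi points $w_1,w_2$ may well lie on the \emph{same} side of the line $pq$, so ``the $D_2$-side'' of $pq$ is not intrinsically meaningful. One can still say that on each side of $pq$ the set $D_1\cup D_2$ agrees with a single one of the two disks, but then the containment $Q_j^{-}\subset$ (minor cap of that disk) is precisely the kind of statement that the tangent-cone argument fails to deliver. You flagged the assignment ambiguity yourself, but the deeper problem is step (2), and to fix it you would essentially be forced to bring in the circumscribed disk of $Q_j$ -- at which point you have reconstructed the paper's argument.
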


\begin{proof}
Fix a value of $1\le j\le h-2$. Let $e'$ be the edge of $Q_j$
passing through $q$; see Figure~\ref{Fig:ProvePolyg}~(b). Let $D$ be
the disk whose center lies on $u_j[p]$ and which passes through $p$
and $q$, and let $D^+$ be the circumscribing disk of $Q_j$. Since $p
\in \partial D \cap\partial D^+$, $q\in \partial D \cap \intr D^+$,
and $D$ and $D^+$ are centered on the ray $u_j[q]$ emanating from
$p$, it follows that $D \subset D^+$. The line $\ell'$ containing
$e'$ crosses $D$ in a chord $qq'$ that is fully contained in $e'$,
as $qq'=D \cap \ell' \subseteq D^+\cap\ell' = e'$.

The angle between the tangent to $D$ at $q$, denoted by $\tau$, and
the chord $qq'$ is equal to the angle at which $p$ sees $qq'$.
This angle is smaller than the angle at which $p$ sees $e'$, which in turn
is equal to $\alpha/2$. Recall that $u_j[p]$ makes an angle of at least
$\alpha$ with each of $pw_1$ and $pw_2$, therefore $\tau$ forms an angle
of at least $\alpha$ with each of the tangents to $D_1,D_2$ at $q$.
Combining this with the fact that the angle between $\tau$ and $e'$ is
at most $\alpha/2$, we conclude that $e'$ forms an angle of at least
$\alpha/2$ with each of these tangents; see Figure~\ref{Fig:ProvePolyg}~(c).

The line $\ell'$ marks two chords $q_1q,qq_2$ within the respective
disks $D_1,D_2$. We claim that $e'$ is fully contained in their
union $q_1q_2$. Indeed, the angle $q_1pq$ is equal to the angle
between $\ell'$ and the tangent to $D_1$ at $q$, so $\angle
q_1pq\geq \alpha/2$. On the other hand, the angle at which $p$ sees
$e'$ is $\alpha/2$, which is no larger. This, and the symmetric
argument involving $D_2$, are easily seen to imply the claim.

Now consider the circumscribing disk $D^+$ of $Q_j$. Denote the
endpoints of $e'$ as $a_1$ and $a_2$, where $a_1$ lies in $q_1q$ and
$a_2$ lies in $qq_2$. Since the ray $\vec{pa}_1$ hits $\partial D^+$
before hitting $\partial D_1$, and the ray $\vec{pq}$ hits these
circles in the reverse order, it follows that the second
intersection of $\partial D_1$ and $\partial D^+$ (other than $p$)
must lie on a ray from $p$ which lies between the rays
$\vec{pa}_1,\vec{pq}$ and thus crosses $e'$. See
Figure~\ref{Fig:ProvePolyg}~(c).  Symmetrically, the second
intersection point of $\partial D_2$ and $\partial D^+$ also lies on
a ray which crosses $e'$. It follows that the arc of $\partial D^+$
delimited by these intersections and containing $p$ is fully
contained in $D_1\cup D_2$.  Hence all the vertices of $Q_j$ (which
lie on this arc) lie in $D_1\cup D_2$. This, combined with the fact,
established in the preceding paragraph, that $e'\subseteq q_1q_2$
implies that $Q_j\subset D_1\cup D_2$.
\end{proof}





Next, we use Lemma~\ref{lemma:LongEucPoly} to prove its converse. Specifically, we prove the following lemma.

\begin{lemma}\label{lem:LongPolygEuc}
Let $p,q\in P$ be a pair of points such that $\Nbrs_j^\poly[p]=q$ for
at least five consecutive indices $j\in \{0,\ldots, k-1\}$.  Then for
each of these indices, except possibly for the first two and the last two indices,
we have $\Nbrs_j[p]=q$.
\end{lemma}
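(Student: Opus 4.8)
The plan is to mimic the structure of the proof of Lemma~\ref{lemma:LongEucPoly}, but with the roles of the Euclidean and polygonal diagrams interchanged, paying the extra ``boundary'' cost that comes from the wider angular range in which the polygonal neighbor is undefined. Suppose $\Nbrs^\poly_j[p]=q$ for five consecutive indices, say $0\le j\le 4$. For each such $j$ let $c_j$ be the center of $Q_{pq}(u_j)$, i.e.\ the point $u_j[p]\cap \bisect^\poly_{pq}$, and let $Q_j$ be that $v_j$-placement of $Q$ at $p$ touching $q$; by hypothesis $\intr Q_j\cap P=\emptyset$. I would first record, from the discussion after \lemref{interleave} (or directly), that since $\Nbrs^\poly_j[p]=q$ for $j=0,\dots,4$, the ray $u_j[p]$ actually crosses the Voronoi edge $e^\poly_{pq}$, so the five centers $c_0,\dots,c_4$ all lie on $e^\poly_{pq}$, and the angular span of the directions $\vec{p c_j}$ is $4\alpha$.

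\medskip

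The core geometric step is the polygonal-to-Euclidean analogue of \lemref{polynorm-empty}: for a middle index $j$ (here $j=2$, and I will need it for $j=2$ only, or possibly $j\in\{1,2,3\}$ with a slightly larger loss), the Euclidean disk $D$ centered on $u_j[p]$ through $p$ and $q$ is contained in $Q_{j-\Delta}\cup Q_{j+\Delta}$ for a suitable offset $\Delta$. Concretely, $D\subset D^+_j$ where $D^+_j$ is the circumscribing disk of $Q_j$, so it suffices to sandwich $D^+_j$ itself, or rather the relevant cap of it, inside the union of two of the $P$-empty polygons $Q_{j'}$. The key quantitative facts are: (i) each edge of $Q$ subtends an angle of $\alpha/2$ from $p$ for the edge through $p$'s contact (as used in \lemref{polynorm-empty}); (ii) consecutive corner placements of $Q$ at $p$ differ in the ray direction $\vec{pc}$ by exactly $\alpha$ (from \lemref{interleave}); and (iii) the circumscribing disk $D^+_j$ of $Q_j$ meets each of the two adjacent $P$-empty copies $Q_{j-1},Q_{j+1}$ in a region that covers the half of $Q_j$ on the respective side, by the same tangent-chord/inscribed-angle estimate as in \lemref{polynorm-empty}. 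Pushing this two steps out on each side — which is why I expect to lose two indices at each end rather than one — lets me cover all of $D^+_j$ (hence $D$) by $\bigcup_{i} Q_i$ over $P$-empty copies, whence $\intr D\cap P=\emptyset$. Since $D$ is a Euclidean disk through $p$ and $q$ centered on $u_j[p]$ and empty of $P$, its center lies on the Euclidean bisector $\bisect_{pq}$ and in fact on $e_{pq}$, so $\Nbrs_j[p]=q$.

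\medskip

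To organize the ``two-step'' bookkeeping, I would argue as follows. Work with $j=2$. The edge $e'$ of $Q_2$ through $q$ makes an angle $\le\alpha/2$ with the tangent $\tau$ to $D$ at $q$ (inscribed angle), exactly as in \lemref{polynorm-empty}. Now $\tau$ is at angular distance at least $2\alpha$ from the tangents at $q$ to the extreme copies $Q_0,Q_4$ (since the directions $\vec{pc_0},\vec{pc_2},\vec{pc_4}$ are $2\alpha$ apart and the tangent direction changes with the center direction). Therefore $e'$ is at angular distance at least $2\alpha-\alpha/2=3\alpha/2$ from those extreme tangents, and — using that each polygon edge at $q$ also spans $\alpha/2$ — the chord that $\ell'$ (the line of $e'$) cuts in each empty disk / empty copy $Q_0$, $Q_4$ comfortably contains the corresponding half of $e'$, with room to spare; the same inscribed-angle comparison as before shows the circumscribing cap of $Q_2$ toward each side is swallowed by $Q_1$ together with $Q_0$ (and symmetrically $Q_3$ with $Q_4$). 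Combining, $Q_2\subseteq D^+_2\subseteq Q_0\cup Q_1\cup Q_3\cup Q_4$, all $P$-empty, and since $D\subset D^+_2$ we get $\intr D\cap P=\emptyset$, finishing the proof for the middle index; the same applies verbatim to any index whose two neighbors on each side are among the five, i.e.\ exactly the indices other than the first two and last two.

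\medskip

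The main obstacle is step two: getting the constant right, i.e.\ verifying that an offset of two suffices (and is necessary) to cover the circumscribing cap of $Q_j$ by neighboring empty copies. The subtlety is that a single neighboring empty copy $Q_{j\pm1}$ does \emph{not} cover the adjacent cap of $Q_j$ (only its relevant half, as in \lemref{polynorm-empty} where two disks were needed), so one genuinely needs the copy two steps out as well; and one must check the chord/angle inequalities with the $\alpha/2$ edge-spans included, which is where the ``$3\alpha/2$ vs.\ $\alpha/2$'' slack is consumed. I would also handle the convex-hull case — where one of the two ``disks'' degenerates to a halfplane and $\Nbrs_j$ may be undefined — by the same footnote-style argument as in the Euclidean stability discussion: an unbounded empty copy still gives an unbounded empty Euclidean region on that side, so the conclusion $\Nbrs_j[p]=q$ (with $q$ possibly at infinity) persists.
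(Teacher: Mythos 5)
Your proof takes a genuinely different route from the paper's, and the route does not close. The paper's argument is a short proof by contradiction that \emph{reuses} the forward implication, Lemma~\ref{lemma:LongEucPoly}: supposing some $r\in P$ satisfies $\varphi_j[p,r]<\varphi_j[p,q]$ for a middle index $j$ and lies, say, to the left of $\vec{pq}$, one observes that the (straight-line) Euclidean bisectors $\bisect_{pr}$ and $\bisect_{pq}$ cross only at the circumcenter of $\triangle pqr$, which must lie to the right of the ray $u_j[p]$; hence every ray $u_i[p]$ counterclockwise of $u_j[p]$ that still meets $\bisect_{pq}$ meets $\bisect_{pr}$ first, giving $\varphi_i[p,r]<\varphi_i[p,q]<\infty$ for $i=j-2,j-1,j$. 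Applying Lemma~\ref{lemma:LongEucPoly} to the three-point set $\{p,q,r\}$ and these three consecutive indices yields $\varphi^\poly_{j-1}[p,r]<\varphi^\poly_{j-1}[p,q]$, contradicting $\Nbrs^\poly_{j-1}[p]=q$. No polygon-covers-disk geometry is needed, and ``lose two on each side'' is exactly the cost of needing three consecutive Euclidean indices to invoke Lemma~\ref{lemma:LongEucPoly}.

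Your proposal instead tries to establish a direct containment such as $D_{pq}(u_j)\subset Q_{j-\Delta}\cup Q_{j+\Delta}$, and at one point the much stronger $D^{+}_{2}\subseteq Q_0\cup Q_1\cup Q_3\cup Q_4$ with $D^{+}_{2}$ the circumdisk of $Q_2$. This is where the argument has a real gap. The estimate of \lemref{polynorm-empty} does not reverse: there a \emph{polygon} is covered by two \emph{disks}, and the decisive last step tracks the finitely many vertices of $Q_j$, which all lie on a single arc of $\bd D^{+}$ through $p$, and shows that arc lies in $D_1\cup D_2$. Going the other way, a disk has no vertices to track; one must instead show that no edge of $Q_{j-\Delta}$ or $Q_{j+\Delta}$ cuts into $D_{pq}(u_j)$, and the tangent/inscribed-angle bound alone does not give this -- the edges turn inward by $\alpha$ at each vertex, and the copies $Q_i$ have \emph{different scaling factors}, a fact the sketch never controls. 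You flag this yourself as ``the main obstacle,'' but as written the key step is simply asserted. Moreover, the strongest form you write down, $D^{+}_{2}\subseteq Q_0\cup Q_1\cup Q_3\cup Q_4$, is too strong to be true in general: $D^{+}_{2}$ bulges past every edge of $Q_2$, including those far from both $p$ and $q$, and there is no reason the nearby copies cover those caps. I recommend abandoning the covering strategy in favor of the paper's reduction: once a blocking $r$ is shown to also block at $j-1,j-2$ (or $j+1,j+2$), Lemma~\ref{lemma:LongEucPoly} applied to $\{p,q,r\}$ finishes immediately.
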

\label{}

\begin{proof}
Again, assume with no loss of generality that $\Nbrs_j^\poly[p]=q$
for $0\leq j\leq h-1$, with $h\geq 5$. Suppose to the contrary that,
for some $2\leq j\leq h-3$, we have $\Nbrs_j[p]\neq q$. By
assumption, $\Nbrs^\poly_i[p]=q$, for each $0\leq i\leq h-1$, and
therefore we have $\varphi_i[p,q]\leq \varphi_i^\poly[p,q]<\infty$,
for each of these indices. In particular, we have
$\varphi_j[p,q]\leq \varphi_j^\poly[p,q]<\infty$, so there exists
$r\in P$ for which $\varphi_j[p,r]<\varphi_j[p,q]$.  Assume with no
loss of generality that $r$ lies to the left of the line from $p$ to
$q$. In this case we claim that
$\varphi_{j-1}[p,r]<\varphi_{j-1}[p,q]<\infty$ and
$\varphi_{j-2}[p,r]<\varphi_{j-2}[p,q]<\infty$.

Indeed, the boundedness of $\varphi_{j-1}[p,q]$ and $\varphi_{j-2}[p,q]$
has already been noted.  Moreover, because $r$ lies to the left of
the line from $p$ to $q$, the orientation of $\bisect_{pr}$ lies
counterclockwise to that of $\bisect_{pq}$. This, and our assumption
that $u_j[p]$ hits $\bisect_{pr}$ before hitting $\bisect_{pq}$,
implies that the point $\bisect_{pr}\cap\bisect_{pq}$ lies to the
right of the (oriented) line through $u_j[p]$; see
Figure~\ref{Fig:Converse}. Hence, any ray $\rho$ emanating from $p$
counterclockwise to $u_j[p]$ that intersects $\bisect_{pq}$ must
also hit $\bisect_{pr}$ before hitting $\bisect_{pq}$, so we have
$\varphi_{j-1}[p,r]<\varphi_{j-1}[p,q]$ and
$\varphi_{j-2}[p,r]<\varphi_{j-2}[p,q]$ (since $j \ge 2$, both
$u_{j-1}[p]$ and $u_{j-2}[p]$ intersect $\bisect_{pq}$), as claimed.
Now applying Lemma~\ref{lemma:LongEucPoly} to the point set
$\{p,q,r\}$ and to the index set $\{j-2,j-1,j\}$, we get that
$\varphi^\poly_{j-1}[p,r]<\varphi^\poly_{j-1}[p,q]$. But this
contradicts the fact that $\Nbrs_{j-1}^\poly[p]=q$. The case where
$r$ lies to the right of $\vec{pq}$ is handled in a fully symmetric
manner, using the indices $\{j,j+1,j+2\}$.
\end{proof}

\begin{figure}[htbp]
\begin{center}
\input{Converse.pspdftex}
\caption{Proof of Lemma~\ref{lem:LongPolygEuc}.}
\label{Fig:Converse}
\end{center}
\end{figure}

Combining Lemmas~\ref{lemma:LongEucPoly} and~\ref{lem:LongPolygEuc}, we obtain the following stronger version
of Theorem~\ref{thm:Qnorm}.
\begin{theorem}
Let $P$ be a set of $n$ points in $\reals^2$, $\alpha>0$ a parameter,
and $Q$ a regular $k$-gon with $k \ge 2\pi/\alpha$.
Then the following properties hold.
\begin{itemize}
\item[(i)] Every $4\alpha$-stable edge in $\DT(P)$
is an $\alpha$-stable edge in $\DT^\poly(P)$.
\item[(ii)] Every $6\alpha$-stable edge in $\DT^\poly(P)$
is also an $\alpha$-stable edge in $\DT(P)$.
\end{itemize}
\end{theorem}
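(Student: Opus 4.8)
The plan is to obtain the theorem by feeding Lemmas~\ref{lemma:LongEucPoly} and~\ref{lem:LongPolygEuc} into two elementary observations. The first observation is that, because every Voronoi cell is star-shaped with respect to its site (Section~\ref{sec:prelim}), the angle at which a point $p$ sees a Voronoi edge $e_{pq}$, in either $\VD(P)$ or $\VD^\poly(P)$, equals the angular measure of the set of directions $u\in\sphere^1$ for which the ray $u[p]$ crosses that edge; in the notation of Section~\ref{sec:polysdg}, this set is precisely the closed angular wedge bounded by $\vec{px}$ and $\vec{py}$, where $x,y$ are the endpoints of the edge, and the directions $u_j$ lying in it are exactly those with $\Nbrs_j(p)=q$ (resp.\ $\Nbrs^\poly_j[p]=q$). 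For a direction $u_j$ that hits a Voronoi vertex we regard the ray $u_j[p]$ as crossing both incident Voronoi edges, which is also the convention implicit in the proofs of the two lemmas. The second observation is that, since $u_0,\dots,u_{k-1}$ are equally spaced with angular gap $\alpha=2\pi/k$, every closed arc of $\sphere^1$ of angular measure at least $m\alpha$ contains at least $m$ of these directions, and they are consecutive in cyclic order.

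For part~(i) I would start from a $4\alpha$-stable edge $pq$ of $\DT(P)$, with $x,y$ the endpoints of $e_{pq}$, so that $\angle xpy\ge 4\alpha$ and $\angle xqy\ge 4\alpha$. By the two observations there are at least four consecutive indices $j_0,j_0+1,j_0+2,j_0+3$ (mod $k$) with $\Nbrs_j(p)=q$; applying Lemma~\ref{lemma:LongEucPoly} with $h=4$ to this run yields $\Nbrs^\poly_j[p]=q$ for the two middle indices $j_0+1,j_0+2$. Consequently the rays $u_{j_0+1}[p]$ and $u_{j_0+2}[p]$ both cross $e^\poly_{pq}$, at two distinct points (distinct rays from $p$ meet only at $p$), so $e^\poly_{pq}$ has positive length, $pq\in\DT^\poly(P)$, and the wedge from $p$ spanned by $e^\poly_{pq}$ contains $u_{j_0+1}$ and $u_{j_0+2}$, which are $\alpha$ apart; hence $p$ sees $e^\poly_{pq}$ at angle at least $\alpha$. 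Since Lemma~\ref{lemma:LongEucPoly} is symmetric in its two generating points, the same argument applied with $q$ in place of $p$ shows that $q$ also sees $e^\poly_{pq}$ at angle at least $\alpha$, so $pq$ is an $\alpha$-stable edge of $\DT^\poly(P)$.

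Part~(ii) runs along identical lines with the two diagrams interchanged and Lemma~\ref{lem:LongPolygEuc} used in place of Lemma~\ref{lemma:LongEucPoly}. Starting from a $6\alpha$-stable edge $pq$ of $\DT^\poly(P)$, the two observations give at least six consecutive indices $j_0,\dots,j_0+5$ with $\Nbrs^\poly_j[p]=q$; Lemma~\ref{lem:LongPolygEuc} (with $h=6$) then removes the first two and the last two of these, leaving $\Nbrs_j(p)=q$ for $j_0+2$ and $j_0+3$. As in part~(i) this forces $p$, and symmetrically $q$, to see $e_{pq}$ at angle at least $\alpha$, so $pq$ is an $\alpha$-stable edge of $\DT(P)$. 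The larger constant $6\alpha$, compared with $4\alpha$ in part~(i), is exactly the price of losing two indices rather than one at each end of the run of consecutive neighbor directions.

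I do not expect a serious obstacle, since Lemmas~\ref{lemma:LongEucPoly} and~\ref{lem:LongPolygEuc} do the real work; the only points requiring a little care are the passage from an angular measure to a guaranteed number of consecutive directions $u_j$ — routine once general position is assumed — and the degenerate situations: a grid ray through a Voronoi vertex, a tie in the definition of $\Nbrs_j(p)$ or $\Nbrs^\poly_j[p]$ at such a placement (harmless, because we only invoke the lemmas on runs and use only their interior output indices, and because the proofs of the lemmas rely only on $P$-emptiness of disks, which is insensitive to ties), and the case $pq\in\conv(P)$ where the dual Voronoi edge is an unbounded ray (handled via the convention for the angle subtended by such a ray introduced in Section~\ref{sec:intro}, under which all of the above carries over verbatim).
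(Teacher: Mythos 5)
Your proposal is correct and follows essentially the same route as the paper's proof: translate the $4\alpha$ (resp.\ $6\alpha$) stability into a run of at least four (resp.\ six) consecutive directions $u_j$ with $\Nbrs_j(p)=q$ (resp.\ $\Nbrs^\poly_j[p]=q$), feed the run into Lemma~\ref{lemma:LongEucPoly} (resp.\ Lemma~\ref{lem:LongPolygEuc}) to extract two consecutive middle indices that survive, and convert those two indices back into an angular extent $\geq\alpha$ via star-shapedness. The paper's own proof is just a terser statement of this same chain, and your extra care about the pigeonhole step for equally spaced directions and about ties at Voronoi vertices and hull rays is consistent with (and a useful elaboration of) what the paper leaves implicit.
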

\begin{proof}
Let $pq$ be a $4\alpha$-stable edge of $\DT(P)$.
Then the corresponding edge $e_{pq}$ in $\VD(P)$ stabs at least four rays 
$u_j[p]$ emanating from $p$,
and, by Lemma~\ref{lemma:LongEucPoly}, $N_j^\poly[p]=q$ for at least
two of these values of $j$. Therefore, $p$ sees the edge $e_{pq}^\poly$
in $\VD^\poly(P)$ at an angle at least $\alpha$. Similarly $q$ sees the edge $e_{pq}^\poly$ at
an angle at least $\alpha$.
Conversely, if $pq$ is $6\alpha$-stable in $\DT^\poly(P)$ then $e^\poly_{pq}$
meets at least six rays $u_j[p]$, and then Lemma~\ref{lem:LongPolygEuc} is easily seen
to imply that $p$ (and, symmetrically, $q$ too) sees $e_{pq}$ at an angle at least $\alpha$. 
\end{proof}

The next lemma gives a slightly different characterization of stable edges, which is more 
algorithmic and which will be useful in maintaining a SDG under a constant-degree algebraic
motion of the points of $P$.
\begin{lemma}
\lemlab{poly-stable}
Let $\G$ be the subgraph of $\DT^\poly(P)$ composed of the edges whose
dual $Q$-Voronoi edges contain at least eleven breakpoints. Then
$\G$ is an $(8\alpha,\alpha)$-$\SDG$ of $P$ (in the Euclidean norm).
\end{lemma}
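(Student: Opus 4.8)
The plan is to verify the two conditions defining an $(8\alpha,\alpha)$-$\SDG$ of $P$: (S1) every edge of $\G$ is $\alpha$-stable in $\DT(P)$, and (S2) every $8\alpha$-stable edge of $\DT(P)$ lies in $\G$. The bridge in both directions is a dictionary between the number of breakpoints on a $Q$-Voronoi edge $e^\poly_{pq}$ and the number of consecutive directions $u_j$ hitting it, plus the analogous count for the Euclidean edge $e_{pq}$; once this is set up, Lemmas~\ref{lemma:LongEucPoly} and~\ref{lem:LongPolygEuc} do the real work. The dictionary: by \lemref{interleave} the breakpoints of $b^\poly_{pq}$ alternate between corner contacts at $p$ and at $q$, and (by general position of $P$ w.r.t.\ $Q$) none is of both types; a corner contact at $p$ is the center of the unique $v_j$-placement of $Q$ at $p$ touching $q$, hence equals $u_j[p]\cap b^\poly_{pq}$, and it lies on $e^\poly_{pq}$ iff $\Nbrs^\poly_j[p]=q$. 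Since along $b^\poly_{pq}$ these contacts occur in the cyclic order of $j$ (the copy $Q_{pq}(u)$ rotates monotonically, cf.\ Section~\ref{sec:prelim}) and $e^\poly_{pq}$ is a connected subchain, $\{j:\Nbrs^\poly_j[p]=q\}$ is a block of $m_p$ consecutive indices, where $m_p$ counts the corner contacts at $p$ on $e^\poly_{pq}$; define $m_q$ symmetrically, so $e^\poly_{pq}$ carries $m_p+m_q$ breakpoints with $|m_p-m_q|\le 1$. In the Euclidean diagram, $\{j:\Nbrs_j[p]=q\}$ is the block of indices whose $u_j$ lies in the angular cone at $p$ spanned by $e_{pq}$; a cone of width $\theta$ contains at least $\lceil\theta/\alpha\rceil-1$ of the $\alpha$-spaced directions $u_j$, and, assuming a mild non-degeneracy (no $u_j[p]$ through a vertex of $\VD(P)$), at least $8$ of them once $\theta\ge 8\alpha$.

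For (S1), let $pq\in\G$, so $e^\poly_{pq}$ has at least $11$ breakpoints; then $m_p+m_q\ge 11$ and $|m_p-m_q|\le 1$ give $\max\{m_p,m_q\}\ge 6$, say $m_p\ge 6$. Thus $\Nbrs^\poly_j[p]=q$ on a block of $6$ consecutive indices, so Lemma~\ref{lem:LongPolygEuc} yields $\Nbrs_j[p]=q$ for the two middle indices $j^\ast,j^\ast+1$; hence $u_{j^\ast}[p]$ and $u_{j^\ast+1}[p]$ meet $e_{pq}$ at points $w_1,w_2$. In particular $e_{pq}\ne\emptyset$, so $pq\in\DT(P)$; and since $[w_1,w_2]\subseteq e_{pq}$ while $\angle w_1pw_2$ equals the angle $\alpha$ between $u_{j^\ast}$ and $u_{j^\ast+1}$, $p$ sees $e_{pq}$ at an angle at least $\alpha$. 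As $p$ and $q$ see $e_{pq}$ at equal angles (Section~\ref{sec:intro}), $pq$ is $\alpha$-stable in $\DT(P)$, proving (S1).

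For (S2), let $pq$ be an $8\alpha$-stable edge of $\DT(P)$, so $p$ sees $e_{pq}$ at an angle at least $8\alpha$ and thus $\{j:\Nbrs_j[p]=q\}$ is a block of at least $8$ indices. Applying Lemma~\ref{lemma:LongEucPoly} (with $h\ge 8$) gives $\Nbrs^\poly_j[p]=q$ on a block of at least $6$ consecutive indices, i.e.\ $e^\poly_{pq}$ contains at least $6$ corner contacts at $p$; by the alternation in \lemref{interleave} it also contains the $\ge 5$ corner contacts at $q$ interleaved with them, so $e^\poly_{pq}$ has at least $11$ breakpoints and $pq\in\G$, proving (S2). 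The substance is entirely in Lemmas~\ref{lemma:LongEucPoly} and~\ref{lem:LongPolygEuc}; the delicate point is the step from $p$ seeing $e_{pq}$ at angle $\ge 8\alpha$ to there being at least $8$ (not merely $7$) stabbing directions $u_j$, which needs the non-degeneracy noted above and is exactly what pins down the constant $8$ in the statement — just as the parity of $11$ forces, via Lemma~\ref{lem:LongPolygEuc}, a block of $6$ rather than $5$ contacts, the least that certifies an angle of $\alpha$.
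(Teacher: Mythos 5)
Your proof is correct and follows essentially the same route as the paper: both directions reduce to Lemmas~\ref{lemma:LongEucPoly} and~\ref{lem:LongPolygEuc} via the count of consecutive directions $u_j$ with $\Nbrs_j[p]=q$ (resp.\ $\Nbrs^\poly_j[p]=q$), combined with the interleaving of breakpoints from \lemref{interleave}, and the fact that $p$ and $q$ see the Euclidean Voronoi edge at equal angles. The one cosmetic difference is in (S2): the paper runs Lemma~\ref{lemma:LongEucPoly} from \emph{both} $p$'s and $q$'s sides to get $6+6=12$ breakpoints, while you run it from $p$ alone and recover the $q$-side contacts via alternation to get $6+5=11$; both suffice. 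Your worry about non-degeneracy (a ray $u_j[p]$ through a Voronoi vertex) is more caution than is needed — the paper's phrasing is in terms of the ray \emph{stabbing} $e_{pq}$, and a closed cone of width $\ge 8\alpha$ always contains at least $8$ of the equally spaced directions, endpoint coincidences included.
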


\begin{proof}
Let $p,q\in P$ be two points. If $(p,q)$ is an $8\alpha$-stable edge
in $\DT(P)$ then the dual Voronoi edge $e_{pq}$  stabs at least eight rays
$u_j[p]$ emanating from $p$, and at least eight rays $u_j[q]$ emanating from $q$.
Lemma~\ref{lemma:LongEucPoly} implies that  $\VD^\poly(P)$ contains the edge
$e_{pq}^\poly$ with at least six breakpoints corresponding to corner
placements of $Q$ at $p$ that touch $q$, and at least six breakpoints
corresponding to corner placements of $Q$ at $q$ that touch $p$.
Therefore, $e_{pq}^\poly$ contains at least twelve breakpoints,
so $(p,q)\in \G$.

Conversely, suppose $p,q\in P$ define an edge $e_{pq}^\poly$ in $\VD^\poly(P)$
with at least eleven breakpoints. By the interleaving property of 
breakpoints, stated in \lemref{interleave},
we may assume, without loss of generality,
that at least six of these breakpoints correspond to empty corner placements
of $Q$ at $p$ that touch $q$.  Lemma~\ref{lem:LongPolygEuc} implies that
$\VD(P)$ contains the edge $e_{pq}$, and that this edge is hit by at least
two consecutive rays $u_j[p]$. But then the $(p,q)$ is $\alpha$-stable in $\DT(P)$.
\end{proof}

In a companion paper~\cite{polydel}, we describe a kinetic data structure (KDS) 
for maintaining $\DT^\poly(P)$. As shown in that paper, it
can also keep track of the number of breakpoints for each edge of $\DT^\poly(P)$. 
If each point of $P$ moves along an algebraic trajectory of bounded
degree, then the KDS processes $O(k^4n\lambda_r(n))$ events, where
$r$ is a constant depending on the complexity of motion of $P$. A change in 
the number of breakpoints in a Voronoi edge is an event that the KDS can detect and process.
As discussed in detail in~\cite{polydel}, many events, so-called \emph{singular} events,
that occur when an edge of $\DT^\poly(P)$ becomes parallel to an edge of $Q$, can occur 
simultaneously. Nevertheless, each of the events can be processed in $O(\log n)$ time,
and their overall number is within the bound cited above.
We maintain the subgraph $\G$ of $\DT^\poly(P)$, consisting of the 
edges of $\DT^\poly(P)$ that have at least \emph{eleven} breakpoints, which,
by \lemref{poly-stable}, is an $(8\alpha,\alpha)$-Euclidean SDG.
Putting everything together, we obtain a KDS that maintains an $(8\alpha,\alpha)$-SDG of $P$. 
It uses linear storage, it processes 
$O\left(\frac{1}{\alpha^4}n\lambda_r(n)\right)$ events, where $r$
is a constant that depends on the degree of the motion of the points
of $P$, and it updates the SDG at each event in $O(\log n)$ time.
This proves Theorem~\ref{Thm:MaintainSDGPolyg}.

\medskip 
\noindent\textbf{Remarks.} 
(i) We remark that $\DT^\poly(P)$ can undergo $\Omega(n)$ 
changes at a time instance when $\Omega(n)$ singular events occur 
simultaneously, say, when $pq$ becomes parallel to an edge of $Q$,
but all these changes occur at the edges incident to $p$ or $q$ in $\DT(P)$. 
However, only $O(1/\alpha)$ edges among them can have at least eleven
breakpoints, before or after the event.  Hence, $O(1/\alpha)$ edges can simultaneously
enter or leave the $(8\alpha,\alpha)$-SDG $\G$ of Theorem \ref{Thm:MaintainSDGPolyg}. 

\smallskip
\noindent
(ii) Note that there is a slight discrepancy between the value of $k$ that
we use in this section ($k\ge 2\pi/\alpha$), and the value needed to ensure that 
the regular $k$-gon $Q_k$ is $\alpha$-close to the Euclidean disk, which is
$k\ge \pi/\alpha$. This is made for the convenience of presentation.


\smallskip
\noindent
(iii) An interesting open problem is whether the dependence on $\alpha$ can be
improved in the above KDS. We have developed an
alternative scheme for maintaining stable (Euclidean) Delaunay
graphs, which  is reminiscent of the kinetic schemes used by
Agarwal~\etal~\cite{KineticNeighbors} for maintaining closest pairs
and nearest neighbors. It extracts a nearly linear number of pairs of points of $P$
that are candidates for being stable Delaunay edges and then sifts the stable
edges from these candidate pairs using the so-called 
\emph{kinetic tournaments}~\cite{bgh-dsmd-99}.  Although the overall 
structure is not complicated, the analysis is rather technical and lengthy, so we
omit this KDS from this version of the paper; it can be
found in the arXiv version~\cite{StableFull}. In summary, the resulting KDS is of size 
$O((n/\alpha^2)\log n)$, it processes a total of
$O((n/\alpha)^2\beta_r(n/\alpha)\log^2n \log(\log(n)/\alpha))$ events,
and it takes a total of 
$O((n/\alpha)^2\log^2 n(\log^2 n + \beta_r(n/\alpha)\log^2n \log^2(\log(n)/\alpha)))$ 
time to process them; here $\beta_r(n)=\lambda_r(n)/n$ is an extremely slowly growing function for any fixed $r$. 
The worst-case time of processing an event is $O(\log^4(n)/\alpha))$. Another advantage of this data
structure is that, unlike the above KDS, it is \emph{local} in the terminology of
\cite{bgh-dsmd-99}.  Specifically,
each point of $P$ is stored, at any given time, at only $O(\log^2 (n)/\alpha^2)$ places in
the KDS. Therefore the KDS can efficiently accommodate an update in the trajectory of a point.

\section{Stability under Nearly Euclidean Distance Functions}
\label{sec:Qnorm}

In this section we prove Theorem~\ref{thm:Qnorm} for an arbitrary convex 
distance function that is $\alpha$-close to the Euclidean norm (see the
Introduction for the definition).

Let $Q$ be a compact, convex set that contains the origin in its interior, 
and let $d_Q$ denote the distance function induced by $Q$. Assume that $Q$ is 
$\alpha$-close to the Euclidean norm.

For sake of brevity, we carry out the proof assuming that $Q$ is strictly convex, i.e.,
the relative interior of the chord connecting two point
$x$ and $y$ on $\bd Q$ is strictly contained in the interior of $Q$ 
(there are no straight segments on the boundary of $Q$).
The proof also holds verbatim when $Q$ is not strictly convex, provided that
no pair of points $p,q\in P$ is such that $\vec{pq}$ is parallel to a 
straight portion of $\bd Q$.

We assume that $P$ is in general position with respect to $Q$, in the sense that
no three points of $P$ lie on a line, and no four points of $P$ are
\emph{$Q$-cocircular}, i.e., no four points of $P$ lie on the
boundary of a common homothetic copy of $Q$.

Recall that for a direction $u$ and for a point $p \in P$, $Q_p(u)$ denotes a ``generic''
homothetic copy of $Q$ that touches $p$ and is centered at some
point on $u[p]$. See Figure~\ref{Fig:Qcopies}~(a). As mentioned in Section~\ref{sec:prelim}, all
homothetic copies of $Q_p(u)$ touch $p$ at (points corresponding to) the same point 
$\zeta\in\bd Q$ and therefore share the same tangent $\ell_p(u)$ at $p$.
If $Q$ is not smooth at $\zeta$, there is a range of possible orientations of such tangents.
In this situation we let $\ell_p(u)$ denote an arbitrary tangent of this kind.
In addition, if $Q$ is smooth at $\zeta$ then $Q_{pq}(u)$ exists for any point $q$ that
lies in the same side of $\ell_p(u)$ as $Q_p(u)$. Otherwise, this has to hold for every
possible tangent $\ell_p(u)$ at $\zeta$, which is equivalent to requiring that $q$
lies in the wedge formed by the intersection of the two halfplanes bounded by the two
extreme tangents at $\zeta$ and containing $Q_p(u)$.

\smallskip
\noindent
{\bf Remarks.}
(1) An important observation is that, when $q$ satisfies these conditions, $Q_{pq}(u)$ is unique,
unless all the three following conditions hold: (i) $Q$ is not strictly convex, (ii) $\vec{pq}$ 
is parallel to straight portion $e$ of $\bd Q$, and (iii) $u$ is a direction connecting some 
point on $e$ to the center $c$ of $Q$. We leave the straightforward proof of this property to the reader.
The proof of the theorem exploits the uniqueness of $Q_{pq}(u)$, and breaks down when it is not unique.
In fact, this is the only way in which the assumptions concerning strict convexity are used in the proof.

\smallskip
\noindent (2) With some care, our analysis applies also if (the directions of) some pairs $pq$ are parallel to straight portions of $\partial Q$, in which case $Q_{pq}(u)$ is not uniquely defined for certain directions $u$.
This extension requires the more elaborate notion of $\alpha$-stability, which ignores the possible two-dimensional portions of $e_{pq}^Q$; see Section \ref{sec:prelim} for more details. Informally, this allows
us to avoid the ``problematic" directions $u$ in which $Q_{pq}(u)$ is not unique. (The latter happens exactly when $u[p]$ hits $b_{pq}^Q$ within one of its two-dimensional portions.) We note, though, that the loss in the amount of stability caused by ignoring a two-dimensional portion of $e_{pq}^Q$ is
at most $2\alpha$. which is an upper bound on the angular span of directions that connect a straight portion of $\bd Q$ to its center. This latter property holds since $Q$ is $\alpha$-close to the Euclidean disk; see below for more details.

\medskip
The proof of Theorem~\ref{thm:Qnorm} relies on the following 
three simple geometric properties. Recall that the $\alpha$-closeness of $Q$ to the Euclidean norm
means that $D_I = (\cos\alpha)D_O\subseteq Q\subseteq D_O$.

\begin{figure}[hbt]
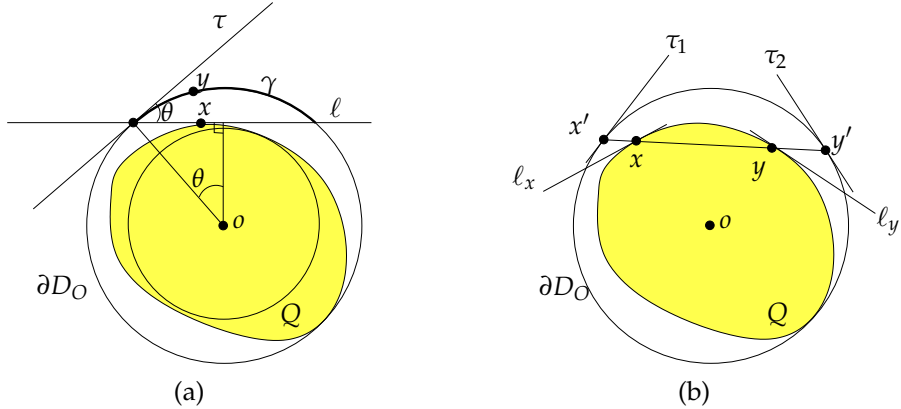

\begin{center}
\begin{tabular}{ccc}
\input{norm.pspdftex}&
\hspace*{1cm}&
\input{ClaimNorm.pspdftex}\\
\small (a)&&\small (b)
\end{tabular}
\caption{Illustrations for: (a) Claim~\ref{Q1}; (b) Claim~\ref{Q3}.}
\label{fig:norm1}
\end{center}
\end{figure}

\begin{claim} \label{Q1}
Let $x$ be a point on $\bd Q$, and let
$\ell$ be a supporting line to $Q$ at $x$.
Let $y$ be the point on $\bd D_O$
closest to $x$ ($x$ and $y$ lie on the same radius from the center $o$), and
let $\gamma$ be the arc of $\bd D_O$ that contains $y$ and is bounded by
the intersection points of $\ell$ with  $\bd D_O$.
Then the angle between $\ell$ and  the tangent, $\tau$, to $D_O$ at any point
along $\gamma$,
is at most $\alpha$.
\end{claim}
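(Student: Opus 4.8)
The plan is to reduce the claim to a single elementary estimate about how much a supporting line of $Q$ can tilt relative to the radial direction at its contact point, and then to propagate that estimate along the arc $\gamma$.

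\medskip

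First I would set up coordinates so that $x$ lies on the positive horizontal axis, with the radius $\vec{ox}$ pointing in direction $u_0$, say. Since $D_I = (\cos\alpha)D_O \subseteq Q \subseteq D_O$, the point $x$ satisfies $\cos\alpha \le |ox| \le 1$. A supporting line $\ell$ to $Q$ at $x$ cannot separate the origin from $D_I$ (because $D_I\subseteq Q$ lies on the inner side of $\ell$), so $\ell$ stays at distance $\ge \cos\alpha$ from $o$; equivalently, the foot of the perpendicular from $o$ to $\ell$ has length $\ge \cos\alpha$ while the contact point $x$ itself has length $\le 1$. Elementary right-triangle trigonometry then bounds the angle $\angle(ox,\ell^\perp)$ — i.e., the deviation of $\ell$ from the tangent line to $D_O$ at $y$ — by $\arccos(\cos\alpha) = \alpha$. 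This is the key inequality: \emph{the supporting line $\ell$ at $x$ makes an angle at most $\alpha$ with the tangent $\tau_y$ to $D_O$ at the radial image $y$ of $x$.}

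\medskip

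Next I would control the tangent direction $\tau$ to $D_O$ as the contact point moves along the arc $\gamma$. The arc $\gamma$ is cut off on $\bd D_O$ by the two intersection points of $\ell$ with $\bd D_O$ and contains $y$; since $\ell$ is at distance $\ge \cos\alpha$ from $o$, this chord subtends a central angle of at most $2\arccos(\cos\alpha) = 2\alpha$ at $o$, so $\gamma$ has angular length at most $2\alpha$ and $y$ — being the radial image of $x$, which on $\ell$ is ``central'' in a weak sense — sits within angular distance at most $\alpha$ of either endpoint of $\gamma$. Hence for any point $z\in\gamma$, the tangent $\tau_z$ to $D_O$ at $z$ differs in direction from $\tau_y$ by the central angle between $z$ and $y$, which is at most $\alpha$. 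Wait — I should double check the constant here: I actually want the final bound to come out to $\alpha$, not $2\alpha$, so I would instead argue directly that for \emph{every} $z\in\gamma$ the tangent $\tau_z$ makes an angle at most $\alpha$ with $\ell$. This follows because the tangent to $D_O$ turns monotonically along $\gamma$, the two extreme values are attained at the endpoints of $\gamma$ where $\tau_z$ is parallel to $\ell$ (the endpoints are where $\ell$ meets $\bd D_O$, and there the chord-tangent geometry forces the tangent to make the same angle with $\ell$ on both ends, by symmetry of the circle about the perpendicular bisector of the chord), and at the ``middle'' point of $\gamma$ — which lies within $\alpha$ of $y$ — the tangent is parallel to the chord $\ell$ itself. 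So the deviation of $\tau_z$ from $\ell$ ranges between $0$ (at the midpoint of $\gamma$) and the half-central-angle of $\gamma$ ($\le\alpha$) at the endpoints. Combining with the first step (which placed $\tau_y$ within $\alpha$ of $\ell$ — in fact the midpoint-of-$\gamma$ argument shows $\tau$ at the midpoint is \emph{exactly} parallel to $\ell$, which is even cleaner), the angle between $\ell$ and $\tau_z$ is at most the half-central-angle of $\gamma$, which is at most $\alpha$.

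\medskip

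The main obstacle is keeping the constant tight: a naive chain of triangle inequalities would give something like $2\alpha$ or $3\alpha$, and the claim asserts exactly $\alpha$. The clean way around this, which I would adopt, is to observe that the chord of $D_O$ along $\ell$ is symmetric about the perpendicular from $o$, so the tangent to $D_O$ at the point of $\gamma$ hit by that perpendicular is \emph{exactly} parallel to $\ell$; from there the tangent at any other point of $\gamma$ deviates from $\ell$ by exactly the central-angle displacement from that symmetric point, which over all of $\gamma$ is bounded by the half-central-angle subtended by the chord, i.e.\ by $\arccos(\mathrm{dist}(o,\ell)) \le \arccos(\cos\alpha) = \alpha$. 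So the whole argument rests on the two facts $\mathrm{dist}(o,\ell)\ge\cos\alpha$ (from $D_I\subseteq Q$) and the symmetry of a circular chord, and everything else is a routine angle computation which I would not spell out in full.
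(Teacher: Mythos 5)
Your proof is correct and uses essentially the same approach as the paper: both arguments hinge on the facts that $\mathrm{dist}(o,\ell) \ge \cos\alpha$ (because $\ell$ supports $Q \supseteq D_I$) and that the angle between $\ell$ and the tangent to $D_O$ is maximized at the endpoints of $\gamma$, where it equals $\arccos(\mathrm{dist}(o,\ell)) \le \alpha$. The paper's version is more terse, and one of your sentences momentarily misstates that the tangent is parallel to $\ell$ at the endpoints of $\gamma$ (you clearly mean the midpoint there, and say so correctly a clause later), but the substance of the reasoning and the conclusion are the same.
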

\begin{proof}
Denote this angle by $\theta$. Clearly $\theta$ is maximized when
$\tau$ is tangent to $D_O$ at an intersection of $\ell$ and $\bd
D_O$; see Figure~\ref{fig:norm1}~(a). For this value of $\theta$, it is easy to verify that
the distance from $o$ to $\ell$ is $\cos\theta$. But this distance
has to be at least $\cos\alpha$, because $\ell$ fully contains $D_I
= (\cos\alpha) D_O$ on one side. Hence $\cos\theta \geq \cos\alpha$,
and thus $\theta \leq \alpha$, as claimed.
\end{proof}

\noindent{\bf Remark.}
An easy consequence of this claim is that the angle in which the center of $Q$ sees any straight portion of $\bd Q$ (when $Q$ is not strictly convex) is at most $2\alpha$.

\begin{claim} \label{Q3}
Let $x$ and $y$ be two points on $\bd Q$, and let $\ell_x$ and
$\ell_y$ be supporting lines of $Q$ at $x$ and $y$, respectively.
Then the difference between the (acute) angles that $\ell_x$ and
$\ell_y$ form with $xy$ is at most $2\alpha$.
\end{claim}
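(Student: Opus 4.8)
The plan is to compare each of the two supporting lines $\ell_x$, $\ell_y$ with the corresponding tangent lines to the outer disk $D_O$, and then use Claim~\ref{Q1} twice together with a fact about tangents to a circle. First I would let $x'$ and $y'$ be the points of $\bd D_O$ on the rays $\vec{ox}$ and $\vec{oy}$, respectively, i.e.\ the points of $D_O$ ``closest'' to $x$ and $y$ in the sense of Claim~\ref{Q1}, and let $\tau_1$ be the tangent to $D_O$ at $x'$ and $\tau_2$ the tangent at $y'$ (see Figure~\ref{fig:norm1}~(b)). Since $D_I=(\cos\alpha)D_O\subseteq Q$ and $\ell_x$ supports $Q$ at $x\in\bd Q$, the point $x$ lies on the arc $\gamma_x$ of $\bd D_O$ cut off by $\ell_x$ that contains $x'$ — indeed $x$ is inside $D_O$ and on the far side of $\ell_x$ from $o$ — so Claim~\ref{Q1}, applied with the roles ``$x$''$\to x$, ``$\ell$''$\to\ell_x$, and the tangent taken at $x'$, gives that the angle between $\ell_x$ and $\tau_1$ is at most $\alpha$. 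Symmetrically, the angle between $\ell_y$ and $\tau_2$ is at most $\alpha$.

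Next I would use the elementary fact that for a circle centered at $o$, the tangent line at the point where the ray $\vec{oz}$ meets the circle is perpendicular to $oz$; hence $\tau_1\perp ox'$ (which is the same line as $ox$) and $\tau_2\perp oy'$ (the line $oy$). Therefore the acute angle that $\tau_1$ makes with the chord $xy$ equals $\tfrac{\pi}{2}$ minus the angle $\angle(ox,xy)$, and likewise for $\tau_2$; a short computation with the triangle $\triangle oxy$ — which is isosceles only if $|ox|=|oy|$, but in general the two base angles $\angle oxy$ and $\angle oyx$ need not be equal — shows that the angle $\tau_1$ makes with $xy$ and the angle $\tau_2$ makes with $xy$ can differ. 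This is exactly where care is needed: it is not true that $\tau_1$ and $\tau_2$ make the same angle with $xy$, so I cannot simply conclude the bound from the two applications of Claim~\ref{Q1} alone.

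To handle this, I would instead argue directly with directed angles modulo $\pi$, which sidesteps the ``acute angle'' bookkeeping. Write $\angle(\ell,m)$ for the directed angle from line $\ell$ to line $m$, taken in $(-\pi/2,\pi/2]$. Then $\angle(\ell_x, xy) = \angle(\ell_x,\tau_1) + \angle(\tau_1, xy)$ and similarly $\angle(\ell_y, xy) = \angle(\ell_y,\tau_2) + \angle(\tau_2, xy)$, and subtracting,
\[
\angle(\ell_x,xy) - \angle(\ell_y,xy)
= \bigl(\angle(\ell_x,\tau_1) - \angle(\ell_y,\tau_2)\bigr)
+ \bigl(\angle(\tau_1,xy) - \angle(\tau_2,xy)\bigr).
\]
The first bracket has absolute value at most $2\alpha$ by the two applications of Claim~\ref{Q1}. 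The key remaining step — and the one I expect to be the main obstacle — is to show the second bracket vanishes, i.e.\ that $\tau_1$ and $\tau_2$ make the \emph{same directed} angle with the line $xy$. But $\tau_1\perp ox$ and $\tau_2\perp oy$, so $\angle(\tau_1,xy)-\angle(\tau_2,xy) = \angle(\tau_1,\tau_2) = \angle(ox^{\perp}, oy^{\perp})$; since rotating both normals by $\pi/2$ preserves the directed angle between them, this equals $\angle(ox,oy)$. So in fact I must show $\angle(ox,oy)=0$, which is false in general — meaning the directed-angle identity I wrote is the wrong decomposition.

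Having spotted this, the correct route is the following. Going back to the isosceles-triangle picture: let $\mu$ be the perpendicular bisector of the chord $xy$; it passes through $o$. The tangent $\tau_1$ at $x'$ is obtained from $\mu$ by a reflection argument — reflecting the configuration across $\mu$ swaps $x$ with $y$ only when $|ox|=|oy|$, so that is still not symmetric. The cleanest fix is to bound things \emph{without} passing through $xy$ symmetrically: observe that the angle $\ell_x$ makes with $xy$ lies within $\alpha$ of the angle $\tau_1$ makes with $xy$, and the angle $\tau_1$ makes with $xy$ equals $\tfrac{\pi}{2}-\angle ox'x\,$... . Since this is getting into exactly the routine trigonometry the problem asks me to avoid, let me state the plan at the level of strategy: reduce, via Claim~\ref{Q1} applied at $x$ and at $y$, to comparing the angles that the two \emph{circle tangents} $\tau_1,\tau_2$ make with the chord $xy$, losing at most $\alpha$ at each end for a total of $2\alpha$; then observe that for a chord of a circle, the tangent lines at its two endpoints make \emph{equal} angles with the chord (the standard inscribed-angle / tangent-chord fact, symmetric in the two endpoints), so the ``tangent part'' contributes zero. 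The subtlety I flagged — that $x'$ and $y'$ are the endpoints of the chord $x'y'$, not of $xy$ — is resolved by noting that $x'y' \parallel xy$ only approximately; more precisely, I would replace $xy$ throughout by $x'y'$ using that $\angle(xy, x'y')$ is itself $O(\alpha)$, which would worsen the constant. The honest conclusion is that the clean $2\alpha$ bound requires taking the tangents to $Q$ at $x$ and $y$, comparing each to the tangent to $D_O$ at \emph{the same point $x$, $y$ viewed as sitting just inside $\bd D_O$} — using the full strength of Claim~\ref{Q1}, whose statement already allows the comparison tangent to be taken at any point of the arc $\gamma$, in particular at an endpoint of $\gamma$ — and then invoking the tangent-chord symmetry for the disk $D_O$ with the genuine chord through $x$ and $y$. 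Thus the main obstacle is purely the geometric bookkeeping of \emph{which} circle tangent to compare against so that the chord symmetry applies cleanly; once the right tangents are chosen, the bound $2\alpha$ follows by adding the two $\le\alpha$ estimates from Claim~\ref{Q1}.
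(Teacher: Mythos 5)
Your final paragraph lands on essentially the paper's proof, but the writeup never cleanly closes and contains one misstep worth correcting: the right points on $\bd D_O$ are \emph{not} the ``endpoints of $\gamma$'' (those are where $\ell_x$ meets $\bd D_O$), nor the radial projections of $x,y$ onto $\bd D_O$ that you start with; they are the intersections of the \emph{line through $x$ and $y$} with $\bd D_O$. Concretely: extend the segment $xy$ beyond $x$ to meet $\bd D_O$ at $x'$, and beyond $y$ to meet $\bd D_O$ at $y'$, and let $\tau_1,\tau_2$ be the tangents to $D_O$ at $x',y'$. Since $y\in Q$ lies on the $o$-side of the supporting line $\ell_x$, the point $x'$ (reached by crossing $\ell_x$ at $x$) lies on the far side, i.e.\ on the arc $\gamma$ for the pair $(x,\ell_x)$ in Claim~\ref{Q1}; hence $|\theta_x-\theta_1|\le\alpha$, where $\theta_1$ is the angle of $\tau_1$ with $x'y'$, and symmetrically $|\theta_y-\theta_2|\le\alpha$. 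Because $x'y'$ is the same line as $xy$, these are literally the same angles as in the statement. Finally, tangent-chord symmetry for the chord $x'y'$ of the disk $D_O$ gives $\theta_1=\theta_2$, and the $2\alpha$ bound follows from the triangle inequality. Your two flagged dead ends --- radial projections and the directed-angle telescoping --- were correctly identified as dead ends; the point you were missing is that choosing $x',y'$ \emph{on the line $xy$} makes the chord symmetry exact (contributing $0$), so there is no extra $O(\alpha)$ to absorb and no worsening of the constant.
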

\begin{proof} Denote the two angles in the claim by $\theta_x$ and
$\theta_y$, respectively. Continue the segment $xy$ beyond $x$ and beyond $y$
until it intersects $D_O$ at $x'$ and $y'$, respectively. 
Let $\tau_1$ and $\tau_2$ denote the respective
tangents to $D_O$ at $x'$ and at $y'$. See Figure
\ref{fig:norm1}~(b). Clearly, the respective angles $\theta_1$,
$\theta_2$ between the chord $x'y'$ of $D_O$ and $\tau_1$, $\tau_2$
are equal.  By Claim~\ref{Q1} (applied once to $\tau_1$ and $\ell_x$ and once
to $\tau_2$ and $\ell_y$) we get that $|\theta_1-\theta_x|\le\alpha$ and
$|\theta_2-\theta_y|\le\alpha$, and the claim follows.
\end{proof}



\begin{figure}[htb]
\begin{center}
\input{CloseTangents.pspdftex}
\caption{Proof of Claim \ref{Claim:TangentsDQ}.}
\label{fig:norm3}
\end{center}
\end{figure}

\begin{claim} \label{Claim:TangentsDQ}
For a point $p\in\bd Q$, any tangent $\ell_{p}$
to $Q$ at $p$ forms an angle at most $\alpha$ with any line
orthogonal to $\vec{op}$.
\end{claim}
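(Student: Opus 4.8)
The plan is to read the bound directly off the ``sandwich'' $D_I=(\cos\alpha)D_O\subseteq Q\subseteq D_O$, using the single key observation that a supporting line to $Q$ cannot come closer to the center $o$ than the radius $\cos\alpha$ of the inner disk $D_I$. Once that is established, the claim is a one-step trigonometric computation in the right triangle formed by $o$, the foot of the perpendicular from $o$ to the line, and the contact point $p$.

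Concretely, I would argue as follows. Fix $p\in\bd Q$ and let $\ell_p$ be any supporting line to $Q$ at $p$. Since $D_I\subseteq Q$ and $Q$ lies entirely on one side of $\ell_p$, the disk $D_I$ lies on that side too, so $\ell_p$ does not meet the interior of $D_I$ and hence $\mathrm{dist}(o,\ell_p)\ge\cos\alpha$. On the other hand $p\in Q\subseteq D_O$, so $\|op\|\le 1$. Let $f$ be the foot of the perpendicular from $o$ to $\ell_p$ and put $\psi=\angle pof$. Since $p\in\ell_p$ we have $\|of\|=\mathrm{dist}(o,\ell_p)$, and in the right triangle $\triangle ofp$ (right-angled at $f$),
\[ \cos\psi=\frac{\|of\|}{\|op\|}\ \ge\ \frac{\cos\alpha}{1}=\cos\alpha , \]
so $\psi\le\alpha$. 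Finally, $of$ is orthogonal to $\ell_p$, so the angle between $\ell_p$ and any line orthogonal to $\vec{op}$ equals the angle between the two normal directions $of$ and $op$, namely $\psi$; hence this angle is at most $\alpha$, proving the claim.

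This is in fact essentially a special case of Claim~\ref{Q1}: taking $x=p$ there, and taking $\tau$ to be the tangent to $D_O$ at the point $y\in\bd D_O$ lying on the ray $\vec{op}$ (which is orthogonal to $\vec{op}$, and lies on the arc $\gamma$ since $o$ is interior to $Q$ and hence on the ``inner'' side of $\ell_p$), Claim~\ref{Q1} already bounds the angle between $\ell_p$ and $\tau$ by $\alpha$. I would present whichever of the two derivations the surrounding text prefers; the direct argument above has the mild advantage of handling the degenerate configurations transparently. If $Q$ is not smooth at $p$, the argument applies verbatim to each of the supporting lines at $p$, since each of them has $D_I$ on one side. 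There is no real obstacle: the only points worth a remark are the boundary case $\|op\|=\cos\alpha$ (when $Q$ touches $D_I$ at $p$), where the displayed inequality forces $\psi=0$, and the uninteresting regime $\alpha\ge\pi/2$, where the statement is vacuous because the angle between two lines never exceeds $\pi/2$.
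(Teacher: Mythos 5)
Your proposal is correct. The second route you offer --- invoking Claim~\ref{Q1} with $x=p$ and $\tau$ the tangent to $D_O$ at the radial point $z\in\vec{op}\cap\bd D_O$ --- is exactly the paper's proof, and your primary direct argument (the right triangle $\triangle ofp$ with $\|of\|=\mathrm{dist}(o,\ell_p)\ge\cos\alpha$ and $\|op\|\le 1$) simply re-runs the same key inequality that underlies Claim~\ref{Q1}; either presentation is fine, and your remark about the non-smooth case and the vacuous regime $\alpha\ge\pi/2$ is accurate.
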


\begin{proof}
See Figure~\ref{fig:norm3}.
Consider the chord $\xi = \ell_p\cap D_{O}$, and let $\gamma$
denote the arc of $\bd D_O$ determined by $\xi$ and containing
the intersection point $z$ of $\bd D_{O}$ and the ray $\vec{op}$.
By Claim~\ref{Q1}, the angle between $\ell_{p}$ and the tangent
to $D_{O}$ at $z$ is at most $\alpha$. Since this tangent
is orthogonal to $\vec{op}$, the claim follows.
\end{proof}

\noindent{\bf Remark.}
Clearly, Claims \ref{Q1}--\ref{Claim:TangentsDQ} continue to hold
for any homothetic copy of $Q$, with a corresponding translation
and scaling of $D_O$ and $D_I$.

\medskip

Let $Q_{pq}^-(u)$ (resp., $Q_{pq}^+(u)$) denote the portion of
$Q_{pq}(u)$ that lies to the left (resp., right) of the directed
line from $p$ to $q$. 
Let $D_{pq}(u)$ denote the disk that touches $p$ and $q$, and whose center lies on $u[p]$.

We next establish the following lemma, whose setup is illustrated in
Figure \ref{fig:qcdv}~(a). It provides the main geometric
ingredient for the proof of Theorem~\ref{thm:Qnorm}.

\begin{lemma}\label{Lemma:ContainQ}
(i) Let $u\in \sphere^1$ be a direction such that both
$Q_{pq}(u)$ and $D_{pq}(u+5\alpha)$ are defined.
Then the region $Q_{pq}^+(u)$ is fully contained in the disk $D_{pq}(u+5\alpha)$.

\smallskip
\noindent(ii) Let $u\in \sphere^1$ be a direction such that both
$Q_{pq}(u)$ and $D_{pq}(u-5\alpha)$ are defined. Then the region
$Q_{pq}^-(u)$ is fully contained in the disk $D_{pq}(u-5\alpha)$.
\end{lemma}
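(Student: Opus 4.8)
The plan is to prove part (i); part (ii) follows by a mirror-symmetric argument (reflecting across the line $pq$ and reversing the orientation of the rotation). Fix a direction $u$ with $Q_{pq}(u)$ and $D_{pq}(u+5\alpha)$ both defined, and abbreviate $Q^+ := Q_{pq}^+(u)$, $D := D_{pq}(u+5\alpha)$. Both $Q_{pq}(u)$ and $D$ pass through $p$ and $q$, so $p,q \in \partial Q_{pq}(u)\cap\partial D$; the chord $pq$ splits each of them, and $Q^+$ and the right part $D^+$ of $D$ both lie on the right side of $\vec{pq}$. The key point is that $D$ is the disk whose center lies on the ray $(u+5\alpha)[p]$, i.e.\ obtained from the ``$u$-disk'' $D_{pq}(u)$ by rotating the center ray clockwise by $5\alpha$ about $p$ (using the paper's convention that $+\theta$ is a clockwise rotation). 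Since $D^+$ is the part to the \emph{right} of $\vec{pq}$, rotating the center clockwise makes $D^+$ \emph{grow}; so intuitively $D^+$ is ``fatter on the right'' than the corresponding parts of $Q_{pq}(u)$ and of $D_{pq}(u)$. I would make this precise via the inscribed-angle characterization: a point lies in the right part of $D$ iff it sees $pq$ at an angle $\ge$ the angle subtended by $pq$ at the far arc of $D$, and that subtended angle is controlled by the direction of the center ray.

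The heart of the argument is a tangent-direction comparison at $p$ (and symmetrically at $q$). By Claim~\ref{Claim:TangentsDQ} (applied to the appropriate homothetic copy of $Q$, as licensed by the Remark after that claim), the tangent line $\ell_p(u)$ to $Q_{pq}(u)$ at $p$ makes an angle at most $\alpha$ with the line through $p$ orthogonal to the center direction $u$ — equivalently, the ``effective center direction'' of $Q_{pq}(u)$ at $p$ differs from $u$ by at most $\alpha$; the same holds at $q$, and by Claim~\ref{Q3} the two tangent directions of $Q_{pq}(u)$ at $p$ and $q$ (measured against the chord $pq$) themselves differ by at most $2\alpha$. Meanwhile the disk $D_{pq}(u)$ has tangent at $p$ \emph{exactly} orthogonal to $u[p]$. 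Putting these together: along the boundary of $Q^+$, the outward normal direction stays within $O(\alpha)$ of directions between the $p$-normal and $q$-normal of $D_{pq}(u)$, and the chord-tangent angles of $Q_{pq}(u)$ at both endpoints are within $2\alpha$ of that of $D_{pq}(u)$. Hence $Q^+ \subseteq D_{pq}^+(u + c\alpha)$ for a suitable small constant $c$; tracking the worst case through Claims~\ref{Q1}--\ref{Claim:TangentsDQ} and the interplay of the two endpoints is exactly where the constant $5$ comes from. Concretely, I would argue that every point $x\in Q^+$ sees $pq$ at an angle at least (the angle at which the far arc of $Q_{pq}(u)$ sees $pq$ on the right), which by the tangent-angle estimates is at least (the corresponding angle for $D_{pq}(u+5\alpha)$), so $x \in D = D_{pq}(u+5\alpha)$.

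There is a routine but necessary check that $Q^+$ does not ``spill out'' near the endpoints $p$ and $q$ themselves: near $p$, the boundary of $Q^+$ stays inside $D$ because $\partial Q_{pq}(u)$ enters the right halfplane of $\vec{pq}$ with tangent direction $\ell_p(u)$, which by Claim~\ref{Claim:TangentsDQ} is turned by at most $\alpha$ from the $u$-normal, hence turned by at most $\alpha$ in the ``inward'' sense relative to the tangent of $D$ at $p$ (whose normal is $u+5\alpha$, i.e.\ rotated $5\alpha$ clockwise) — so $Q^+$ starts out strictly inside $D$, and symmetrically at $q$. The main obstacle, and the only genuinely delicate part, is bookkeeping the several $\alpha$'s with the correct signs: one must verify that the clockwise rotation by $5\alpha$ is in the direction that \emph{enlarges} the right part $D^+$ (this uses the paper's orientation convention for $u+\theta$ and the fact that $Q^+$ is the right portion), and that $5\alpha = \alpha$ (from $\ell_p$ vs.\ $u$-normal at $p$) $+\ \alpha$ (same at $q$) $+\ 2\alpha$ (Claim~\ref{Q3} slack between the two endpoints) $+\ \alpha$ (a final margin to guarantee strict containment, accommodating that $Q_{pq}(u)$ and $D_{pq}(u)$ agree only at $p,q$ and not in between) is indeed an upper bound on the total deviation. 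Once the sign conventions are pinned down, the containment is a one-line consequence of the inscribed-angle monotonicity; everything else is the trigonometric accounting already packaged in Claims~\ref{Q1}--\ref{Claim:TangentsDQ}. For part (ii), replace every clockwise rotation by a counterclockwise one and swap the roles of the left and right portions; the identical estimates give $Q_{pq}^-(u)\subseteq D_{pq}(u-5\alpha)$.
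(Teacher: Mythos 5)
Your proposal captures the correct ingredients (Claims \ref{Q1}--\ref{Claim:TangentsDQ}, local tangent comparisons at $p$ and at $q$, the observation that near $p$ and near $q$ the boundary $\bd Q_{pq}^+(u)$ starts out inside $D_{pq}(u+5\alpha)$), but the central step is glossed over in a way that does not actually close the argument. The phrase ``the angle at which the far arc of $Q_{pq}(u)$ sees $pq$'' is ill-defined: for a non-circular convex $Q$ the inscribed angle $\angle pxq$ is \emph{not} constant along the arc of $\bd Q_{pq}(u)$, so there is no single tangent-chord quantity to compare against. What you would actually need is that $\min_{x}\angle pxq$ over the whole right arc of $\bd Q_{pq}(u)$ is at least the (constant) inscribed angle of $D_{pq}(u+5\alpha)$, and that minimum can a priori be attained at an interior point of the arc, where the near-$p$ and near-$q$ tangent estimates say nothing. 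This is precisely the part the paper's proof handles differently: it establishes the local penetration at $p$ and at $q$, and then rules out any \emph{third} intersection point $w$ of $\bd Q_{pq}^+(u)$ with $\bd D_{pq}(u+5\alpha)$ by applying the same tangent-comparison argument to the pairs $(p,w)$ and $(q,w)$ and deriving contradictory positions of $\ell_w$ relative to $\tau_w^+$. That global ``no third crossing'' step is the one missing piece in your outline; it is not a ``routine check.''

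A secondary issue is the bookkeeping. Your decomposition $5\alpha = \alpha + \alpha + 2\alpha + \alpha$ includes ``$\alpha$ (from $\ell_q$ vs.\ $u$-normal at $q$),'' which implicitly applies Claim~\ref{Claim:TangentsDQ} at $q$ with respect to the direction $u$. But the center of $Q_{pq}(u)$ lies on $u[p]$, not on $u[q]$, so the direction from $q$ to the center is not $u$ and that application is not licensed---this is exactly the asymmetry the paper flags explicitly. The paper instead gets the $q$-side estimate from Claim~\ref{Q3} alone: the margin is at least $4\alpha$ at $p$ (since $\ell_p$ is within $\alpha$ of $\tau_p$ and $\tau_p^+$ is rotated by $5\alpha$), and therefore at least $2\alpha$ at $q$ after losing $2\alpha$ to Claim~\ref{Q3}. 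These two margins ($4\alpha$ at $p$, $2\alpha$ at $q$) are then exactly what feed the contradiction at the hypothetical third point $w$. So while the total does come to $5\alpha$, the individual contributions in your accounting are not the ones actually available.
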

\begin{proof}
It suffices to establish Part (i) of the lemma; the proof of the
other part is fully symmetric.

Refer to Figure \ref{fig:qcdv}~(b). Let $\ell_p=\ell_p(u)$ be any
supporting line of $Q_{pq}(u)$ at $p$, as defined above, and let
$\tau_p=\tau_p(u)$ be the line through $p$ that is orthogonal to $u$
(which is also the tangent to $D_{pq}(u)$). By Claim~\ref{Claim:TangentsDQ}, 
the angle between $\ell_p$ and $\tau_p$ is
at most $\alpha$. We next consider the tangent $\tau_p^+$ to
$D_{pq}(u+5\alpha)$ at $p$. Since the angle between $\tau_p$ and
$\tau_p^+$ is $5\alpha$, it follows that the angle between $\ell_p$
and $\tau_p^+$ is at least $4\alpha$.
The preceding arguments imply that, when oriented into the right
side of $\vec{pq}$, $\ell_p$ lies between $\vec{pq}$ and $\tau_p^+$,
and the angle between $\ell_p$ and $\tau_p^+$ is at least
$4\alpha$.

This implies that, locally near $p$, $\bd Q^+_{pq}(u)$ penetrates
into $D_{pq}(u+5\alpha)$. This also holds at $q$. To establish the
claim for $q$ (which is not symmetric to the claim for $p$, because the
center $c$ of $Q_{pq}(u)$ lies on the ray $u[p]$ emanating from $p$,
and there is no control over the orientation of the corresponding
ray $\vec{qc}$ emanating from $q$), we note that, by Claim~\ref{Q3},
the angles between $pq$ and any pair of tangents $\ell_p$, $\ell_q$ to
$Q_{pq}(u)$ at $p$, $q$, respectively, differ by at most $2\alpha$,
whereas the angles between $pq$ and the two tangents $\tau_p^+$,
$\tau_q^+$ to $D_{pq}(u+5\alpha)$ at $p$, $q$, respectively, are
equal. This, and the fact that the angle between $\tau_p^+$ and
$\ell_p$ is at least $4\alpha$, imply that, when oriented into the
right side of $\vec{pq}$, $\ell_q$ lies between $\vec{qp}$ and
$\tau_q^+$, which thus implies the latter claim. Note also that the
argument just given ensures that the angle between $\tau_q^+$ and
$\ell_q$ is at least $2\alpha$.

\begin{figure}[hbt]
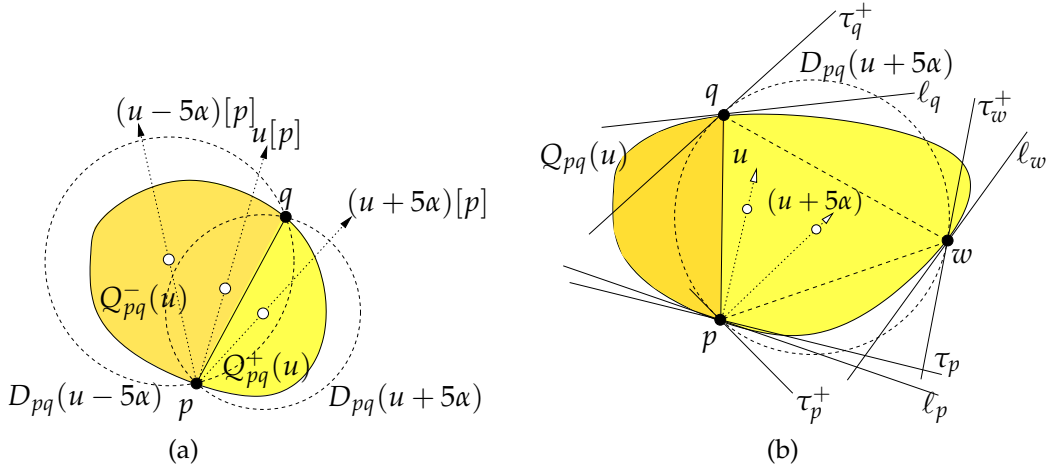

\centering
\begin{tabular}{ccc}
\input{TwoParts.pspdftex}&\hspace*{1.5cm}&\input{qcdv1.pspdftex}\\
\small (a)&&\small (b)
\end{tabular}
\caption{(a) The setup of Lemma \ref{Lemma:ContainQ};
(b) Proof of Lemma \ref{Lemma:ContainQ}~(i): $\bd Q^+_{pq}(u)$
cannot cross $\bd D_{pq}(u+5\alpha)$ at any third point $w$.}
\label{fig:qcdv}
\end{figure}

It therefore suffices to show that $\bd Q_{pq}^+(u)$ does not cross
$\bd D_{pq}(u+5\alpha)$ at any third point (other than $p$ and $q$).
Suppose to the contrary that there exists such a third point $w$,
and consider the tangents $\ell_w$ to $Q_{pq}(u)$ at $w$, and
$\tau_w^+$ to $D_{pq}(u+5\alpha)$ at $w$. Consider the two points
$p$ and $w$, and apply to them an argument similar to the one used
above for $p$ and $q$. Specifically, we use the facts that (i) the
angles between $pw$ and $\tau_p^+$, $\tau_w^+$ are equal, (ii) the
angles between $pw$ and $\ell_p$, $\ell_w$, for any tangent $\ell_w$ 
to $Q_{pq}(u)$ at $w$, differ by at most
$2\alpha$, and (iii) the angle between $\ell_p$ and $\tau_p^+$ is at
least $4\alpha$, to conclude that, when oriented into the left side
of $\vec{wp}$, $\ell_w$ lies strictly between $\vec{wp}$ and
$\tau_w^+$. See Figure~\ref{fig:qcdv}. Similarly, applying the
preceding argument to $q$ and $w$, we now use the facts that (i) the
angles between $wq$ and $\tau_q^+$, $\tau_w^+$ are equal, (ii) the
angles between $qw$ and $\ell_q$, $\ell_w$ differ by at most
$2\alpha$, and (iii) the angle between $\tau_q^+$ and $\ell_q$ is at
least $2\alpha$, to conclude that, when oriented into the right side
of $\vec{wq}$, $\ell_w$ lies between $\vec{wq}$ and $\tau_w^+$ or
coincides with $\tau_w^+$. This impossible configuration shows that
$w$ cannot exist, and consequently that $Q_{pq}^+(u)\subset
D_{pq}(u+5\alpha)$.
\end{proof}

\paragraph{Proof of Theorem \ref{thm:Qnorm} -- Part (i).}
Let $pq$ be an $11\alpha$-stable edge in the Euclidean Delaunay
triangulation $\DT(P)$.  That is, the Euclidean Voronoi edge
$e_{pq}$ is hit by two rays $u^-[p],u^+[p]$ which form an angle of
at least $11\alpha$ between them (where $u^+$ is assumed to lie
counterclockwise to $u^-$). Clearly, $e_{pq}$ is also hit by any ray $u[p]$
whose direction $u$ belongs to the interval $(u^-,u^+)\subset
\sphere^1$.  Let $u[p]$ be such a ray whose direction $u$ belongs to
the interval $(u^-+5\alpha,u^+-5\alpha)$ (of span at least
$\alpha$).  That is, $u[p]$ hits $e_{pq}$ ``somewhere in the
middle'', so all the three disks $D_{pq}(u-5\alpha),D_{pq}(u)$ and
$D_{pq}(u+5\alpha)$ are defined and contain no points of $P$ in
their respective interiors. (Actually, $D_{pq}(u)$ is contained in
$D_{pq}(u-5\alpha)\cup D_{pq}(u+5\alpha)$, as is easily checked.)

We next consider the $Q$-Voronoi diagram $\VD^Q(P)$. We claim that
the corresponding edge $e^Q_{pq}$ exists and is also hit by $u[p]$.
Since this holds for every $u\in (u^-+5\alpha,u^+-5\alpha)$,
it follows that $(p,q)$ is $\alpha$-stable in $\VD^Q(P)$.

To establish this claim, we prove the following two properties.
\begin{itemize}
\item[(i)] the homothetic copy $Q_{pq}(u)$ exists, and 
\item[(ii)] it contains no points of $P$ in its interior.
\end{itemize}
\medskip

\noindent
\textbf{\textit{Proof of (i):}}\hspace*{3mm}
Assume to the contrary that the copy $Q_{pq}(u)$
is undefined. Consider the respective tangents $\ell_p(u)$ and
$\tau_p(u)$ to $Q_{p}(u)$ and $D_{pq}(u)$ at $p$, where $\ell_p(u)$ is
any tangent to $Q_p(u)$ at $p$ that separates $q$ from $Q_p(u)$;
such a tangent exists if and only if $Q_{pq}(u)$ is undefined. 
(As noted before, $\ell_p(u)$ does not depend on the location of 
the center $c$ of $Q$ on $u[p]$.) By Claim \ref{Claim:TangentsDQ}, 
the angle between $\ell_p(u)$ and $\tau_p(u)$ is at most $\alpha$. 
Since $Q_{pq}(u)$ is undefined, the choice of $\ell_p(u)$ guarantees 
that $q$ lies inside the open halfplane $h_p(u)$ bounded by
$\ell_{p}(u)$ and disjoint from $u[p]$.

Let $\mu_p(u+5\alpha)$ (resp., $\mu_p(u-5\alpha)$) denote the open
halfplane bounded by $\tau_p(u+5\alpha)$ (resp.,
$\tau_p(u-5\alpha)$) and disjoint from the disk $D_{pq}(u+5\alpha)$
(resp., $D_{pq}(u-5\alpha)$). Since each of the lines
$\tau_p(u+5\alpha)$, $\tau_p(u-5\alpha)$ makes an angle of at least
$5\alpha$ with $\tau_p(u)$, the halfplane $h_p(u)$ supported by
$\ell_{p}(u)$ is contained in the union $\mu_p(u+5\alpha)\cup
\mu_p(u-5\alpha)$. Since $q$ is contained in $h_p(u)$, at least one
of these latter halfplanes, say
$\mu_p(u+5\alpha)$, must contain $q$.
However, if $q\in \mu_p(u+5\alpha)$, the corresponding copy
$D_{pq}(u+5\alpha)$ is undefined, a contradiction that establishes (i).
\medskip

\noindent
\textbf{\textit{Proof of (ii):}}\hspace*{3mm}
Since both $Q_{pq}(u)$ and $D_{pq}(u+5\alpha)$ are defined, 
Lemma \ref{Lemma:ContainQ}(i) implies that 
$Q_{pq}^+(u) \subset D_{pq}(u+5\alpha)$. Moreover  the interior 
of $D_{pq}(u+5\alpha)$ is $P$-empty, so the interior of $Q_{pq}^+(u)$ is also $P$-empty.
A symmetric argument (using Lemma \ref{Lemma:ContainQ}(ii)) implies that
the interior of $Q_{pq}^-(u)$ is also $P$-empty. 

This completes the proof of part (i) of Theorem~\ref{thm:Qnorm}.

\paragraph{Proof of Theorem \ref{thm:Qnorm} -- Part (ii).}
We fix a direction $u\in \sphere^1$ for which all the three copies
$Q_{pq}(u)$, $Q_{pq}(u-5\alpha)$, and $Q_{pq}(u+5\alpha)$ are
defined and have $P$-empty interiors. Again, 
$Q_{pq}(u) \subset Q_{pq}(u-5\alpha) \cup Q_{pq}(u+5\alpha)$.
Since $pq$ is $11\alpha$-stable under $d_Q$, there is an arc on 
$\sphere^1$ of length at least $\alpha$, so that every $u$ in this arc has this
property. We need to show that, for every such $u$,
\begin{itemize}
\item[(i)] the copy $D_{pq}(u)$ is defined, and 
\item[(ii)] its interior is $P$-empty. 
\end{itemize}
Similar to the proof of Part~(i), this would imply that the ray $u[p]$ hits 
the edge $e_{pq}$ of $\VD(P)$ for every $u$ in an arc of length $\alpha$, so
$pq$ is $\alpha$-stable in $\DT(P)$, as claimed.
\medskip

\noindent
\textbf{\textit{Proof of (i):}}\hspace{3mm}
Assume to the contrary that $D_{pq}(u)$ is undefined, so
the angle between the vectors $\vec{pq}$ and $u$ is at least
$\pi/2$. Let $\ell_p(u-5\alpha)$, $\ell_p(u)$, and
$\ell_p(u+5\alpha)$ be any triple of respective tangents to
$Q_{pq}(u-5\alpha)$, $Q_{pq}(u)$, and $Q_{pq}(u+5\alpha)$ at $p$. 
Let $h_p(u-5\alpha)$ (resp., $h_p(u+5\alpha)$) be the open halfplane
supported by $\ell_p(u-5\alpha)$ (resp., $\ell_p(u+5\alpha)$) and
disjoint from $Q_{pq}(u-5\alpha)$ (resp., $Q_{pq}(u+5\alpha)$).
Claim \ref{Claim:TangentsDQ} implies that each of
the lines $\ell_p(u-5\alpha)$, $\ell_p(u+5\alpha)$ makes with
$\tau_{p}(u)$, the line orthogonal to $u[p]$ at $p$, an angle of at
least $4\alpha$ (and at most $6\alpha$). Indeed, the claim implies
that the angle between $\ell_p(u-5\alpha)$ and the line
$\tau_p(u-5\alpha)$, which is orthogonal to $(u-5\alpha)[p]$ at $p$, is at
most $\alpha$. Since the angle between $\tau_p(u-5\alpha)$ and
$\tau_p(u)$ is $5\alpha$, the claim for $\ell_p(u-5\alpha)$ follows.
A symmetric argument establishes the claim for $\ell_p(u+5\alpha)$.
Therefore, the halfplane $\mu_p(u)$, supported by $\tau_p(u)$ and
containing $q$, is covered by the union of $h_p(u-5\alpha)$ and
$h_p(u+5\alpha)$. We conclude that at least one of these latter
halfplanes must contain $q$. However, this contradicts the
assumption that both copies $Q_{pq}(u-5\alpha)$, $Q_{pq}(u+5\alpha)$
are defined, and (i) follows.
\medskip

\noindent
\textbf{\textit{Proof of (ii):}}\hspace{3mm}
Assume to the contrary that $D_{pq}(u)$, whose
existence has just been established, contains some point $r$ of $P$
in its interior. That is, the ray $u[p]$ hits $b_{pr}$ before
$b_{pq}$. In this case $D_{pr}(u)$ also exists. With no
loss of generality, we assume that $r$ lies to the left of the
oriented line from $p$ to $q$.


We claim that the homothetic copy $Q_{pr}(u-5\alpha)$ exists and
contains $q$.  Indeed, since $Q_{pq}(u-5\alpha)$ exists and is
$P$-empty, it follows that $(u-5\alpha)[p]$ either hits $b^Q_{pr}$
after $b^Q_{pq}$ (in which case the claim obviously holds) or misses
$b_{pr}^Q$ altogether. Suppose that $(u-5\alpha)[p]$ misses
$b^Q_{pr}$. As argued earlier, this means that there exists a tangent
$\ell_p(u-5\alpha)$ to $Q_p(u-5\alpha)$ at $p$, such that $r$ lies in 
the open halfplane $h_p(u-5\alpha)$ supported by $\ell_p(u-5\alpha)$ 
and disjoint from $Q_{p}(u-5\alpha)$. 

By applying Claim \ref{Claim:TangentsDQ} as before we get that the
tangent $\tau_p(u)$ to $D_{pq}(u)$ (at $p$) to the left of
$\vec{pq}$ is between $\vec{pq}$ and $\ell_p(u-5\alpha)$ and makes
with $\ell_p(u-5\alpha)$ an angle of at least $4\alpha$. It follows
that the wedge formed by the intersection of $\ell_p(u-5\alpha)$
 and the halfplane to the left of $\vec{pq}$  is fully contained in the halfplane
$\mu_{p}(u)$ that is supported by $\tau_p(u)$ and disjoint from
$D_{pq}(u)$; see Figure \ref{fig:ReverseQ}~(a). But then
$D_{pr}(u)$ is undefined, a contradiction that
implies the existence of $Q_{pr}(u-5\alpha)$.

\begin{figure}[htb]
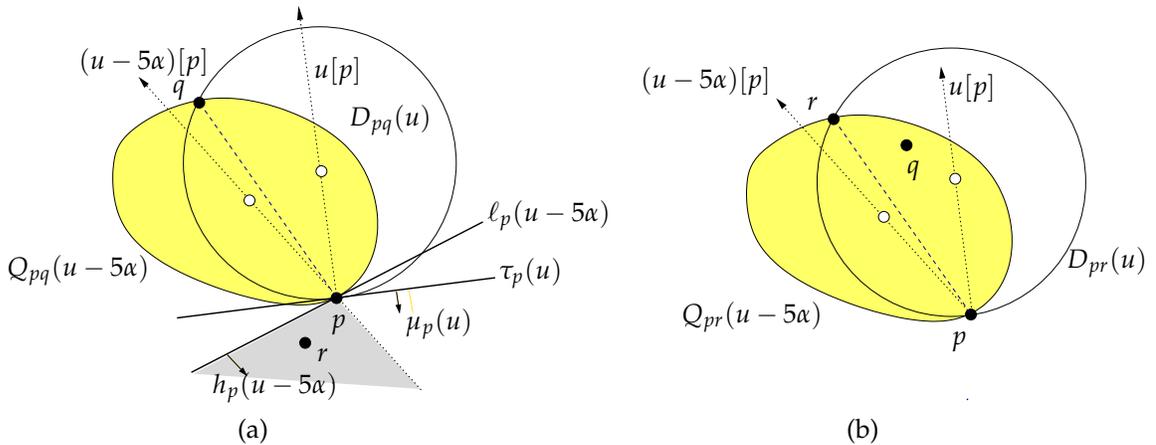

\centering
\begin{tabular}{ccc}
\input{CopyExists.pspdftex}&\hspace*{1.0cm}&\input{SecondDirection.pspdftex}\\
\small (a)&&\small (b)
\end{tabular}
\caption{Proof of Theorem \ref{thm:Qnorm}(ii): (a) Arguing (by
contradiction) that $Q_{pr}(u-5\alpha)$ exists; (b) the copy
$Q_{pr}(u-5\alpha)$ is defined and contains $q$. Hence, the disk
$D_{pr}(u)$, which covers $Q^+_{pr}(u-5\alpha)$, must contain $q$ as
well.} 
\label{fig:ReverseQ}
\end{figure}


We can now assume that $Q_{pr}(u-5\alpha)$ is defined and contains
$q$. More precisely, $q$ lies in the portion $Q^+_{pr}(u-5\alpha)$
of $Q_{pr}(u-5\alpha)$, since $q$ lies to the right of the oriented
line from $p$ to $r$.  However, Lemma \ref{Lemma:ContainQ}(i),
applied to $u-5\alpha$, implies that $Q^+_{pr}(u-5\alpha)$ is
contained in $D_{pr}(u)$, so $D_{pr}(u)$ also contains $q$; see
Figure \ref{fig:ReverseQ}~(b). It is however impossible for
$D_{pr}(u)$ to contain $q$ and for $D_{pq}(u)$ to contain $r$. This
contradiction concludes the proof of part~(ii) of Theorem~\ref{thm:Qnorm}. $\Box$


\section{Properties of stabe Delaunay graphs} 
\label{sec:SDGProperties}

We establish a few useful properties of stable Delaunay graphs in this section.

\paragraph{Near cocircularities do not show up in an SDG.}
Consider a critical event during the kinetic maintenance of $\DT(P)$,
in which four points $a,b,c,d$ become cocircular,
in this order, along their circumcircle, with this circle being $P$-empty.
Just before the critical event, the Delaunay triangulation contained
two triangles formed by this quadruple, say, $abc$, $acd$.
The Voronoi edge $e_{ac}$ then shrinks
to a point (namely, to the circumcenter of $abcd$ at the critical event),
and, after the critical cocircularity, is replaced by the Voronoi edge
$e_{bd}$, which expands from the circumcenter as time progresses.

Our algorithm will detect the possibility of such an event before the
criticality occurs, when $ac$ ceases to be $\alpha$-stable (or even before
this happens). It will then remove this edge from the stable subgraph,
so the actual cocircularity will not be recorded. The new edge $bd$
will then be detected by the algorithm only when it becomes at least $\alpha$-stable
(if this happens at all), and will then enter the stable Delaunay graph.
In short, critical cocircularities do not arise {\em at all} in our scheme.

As noted in the introduction, a Delaunay edge $ab$ (interior to the
hull) transitions from being $\alpha$-stable to not being stable, or vice-versa,
when the sum of the opposite angles in its two adjacent Delaunay
triangles is $\pi-\alpha$ (see Figure~\ref{Fig:LongDelaunay}). This
shows that changes in the stable Delaunay graph occur when the
``cocircularity defect'' of a nearly cocircular quadruple (i.e., the
difference between $\pi$ and the sum of opposite angles in the
quadrilateral spanned by the quadruple) is between $\alpha$ and
$8\alpha$.
Note that a degenerate case of cocircularity is a collinearity on
the convex hull. Such collinearities also do not show up in the
stable Delaunay graph. A hull collinearity between three nodes $a,
b, c$ is detected before it happens, when (or before) the
corresponding Delaunay edge is no longer $\alpha$-stable, in which case the
angle $\angle acb$, where $c$ is the middle point of the
(near-)collinearity becomes $\pi-\alpha$ (see
Figure~\ref{collinearity}(a)). Therefore a hull edge is removed from
the $\SDG$ if the Delaunay triangle is almost flat. The edge (or any
new edge about to replace it) re-appears in the $\SDG$ when it 
becomes $c\alpha$-stable, for some $1 \le c \le 8$.

\begin{figure}[htbp]
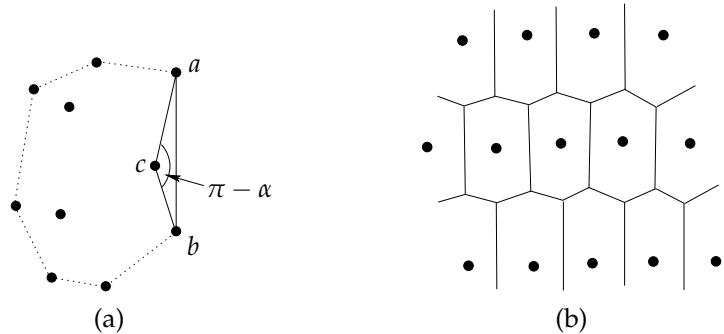

\begin{center}
\begin{tabular}{ccc}
\input{collinearityNew.pspdftex} &
\hspace*{2cm}&
\input{Grid.pspdftex}\\
\small (a)&&\small (b)
\end{tabular}
\caption{(a) The near collinearity that corresponds to a Delaunay edge
ceasing to be $\alpha$-stable.
(b) A set of points for which the number of $\alpha$-stable edges
in $\DT(P)$ (those corresponding to the vertical Voronoi edges)
is close to $n$.}
\label{collinearity}
\end{center}
\end{figure}

\paragraph{SDGs are not too sparse.}
Let $P$ be a set of $n$ points in the plane.  We give a lower bound
on the number of $\alpha$-stable Delaunay edges in the Delaunay
triangulation of $P$. Our lower bound approaches $n$ as $\alpha$
decreases to zero.

Let $n_0$ be the number of points with no incident $\alpha$-stable
edges in $\DT(P)$ and let $n_1$ be the number of points with a
single incident $\alpha$-stable edge in $\DT(P)$. Clearly the total
number of $\alpha$-stable edges in $\DT(P)$ is at least
\begin{equation} \label{eqnZ}
 \frac{2(n-n_0 -n_1) +
n_1}{2} = n - \left( \frac{2n_0 + n_1}{2} \right) \ .\end{equation}

We now derive an upper bound on $\frac{2n_0 + n_1}{2}$. Consider a
vertex $v$ with no incident $\alpha$-stable edges. If $v$ is not a
vertex of the convex hull then its degree in $\DT(P)$ must be at least $2\pi /
\alpha$ (the boundary of its cell in $\VD(P)$ contains at least
$2\pi / \alpha$ $\alpha$-short edges). If $v$ is a vertex of the
convex hull then its degree must be at least $(2\pi - d(v)) /
\alpha$ where $d(v)$ is the angle between the two infinite rays
bounding the Voronoi cell of $v$.  Similarly,
consider a vertex $v$ with one incident $\alpha$-stable edge. If $v$
is not a vertex of the convex hull then its degree must be at least
$\pi/\alpha + 1$ and if $v$ is a vertex of the hull then  its degree
is at least $(\pi-d(v))/\alpha + 1$. Since $\sum d(v)$ over all hull
vertices is $2\pi$, we  get that the sum of the degrees of the
vertices in $\DT(P)$ is at least
\begin{equation} \label{degreeLB}
n_1 \left(\frac{\pi}{\alpha}+1\right)+2n_0\frac{\pi}{\alpha} -
\frac{2\pi}{\alpha}  \ .
\end{equation}
On the other hand, the Delaunay triangulation of any set with $h$
vertices on the convex hull consists of $3n-h-3$ edges so the sum of
the degrees is $6n-2h-6$. Combining this with the lower bound in
(\ref{degreeLB}) we get that
$$
n_1 \left(\frac{\pi}{\alpha}+1\right)+2n_0\frac{\pi}{\alpha} -
\frac{2\pi}{\alpha} \le 6n ,
$$
which implies that
$$
\left( \frac{2n_0 + n_1}{2} \right) \le \frac{6n}{\pi/\alpha} + 2 \
.
$$
Substituting this upper bound in Equation (\ref{eqnZ}) we get that
the number of $\alpha$-stable edges in $\DT(P)$ is at least
$$
n \left(1 - \frac{6\alpha}{\pi} \right) - 2 \ .
$$

This is nearly tight, since, for any $\alpha$, there exist  sets of $n$ points for
which the number of $\alpha$-stable edges is  roughly $n$; see
Figure \ref{collinearity}(b).

\paragraph{Closest pairs, crusts, $\beta$-skeleta, and the SDG.}
Let $\beta\geq 1$, and let $P$ be a set of $n$ points in the plane.
The \textit{$\beta$-skeleton} of $P$ is a graph on $P$ that consists
of all the edges $pq$ such that the union of the two disks of radius
$(\beta/2)d(p,q)$, touching $p$ and $q$, does not contain any point
of $P\setminus\{p,q\}$. See, e.g., \cite{Crusts,Skeletons} for
properties of the $\beta$-skeleton, and for its applications in
surface reconstruction.  We claim that the edges of the
$\beta$-skeleton are $\alpha$-stable in $\DT(P)$, provided
$\beta\geq 1+\Omega(\alpha^2)$. Indeed, let $pq$ be an edge of the
$\beta$-skeleton of $P$, for $\beta>1$.
 Let $c_1$ and $c_2$ be the centers of the two empty disks
of radius $(\beta/2)d(p,q)$ touching $p$ and $q$; see
Figure~\ref{Fig:Skeleton}(a). Clearly $\angle c_1pq=\angle c_2pq$.
Denote $\theta =  \angle c_1pq=\angle c_2pq $. Each of $p,q$ sees
the Voronoi edge $e_{pq}$ at an angle at least $2\theta$, so it is
$2\theta$-stable. We have $1/\beta=\cos \theta\approx 1-\theta^2/2$
or $\beta=1+\Theta(\theta^2)$. That is, for $\beta\geq
1+\Omega(\alpha^2)$, with an appropriate (small) constant of
proportionality, $pq$ is $\alpha$-stable.

\begin{figure}[htbp]
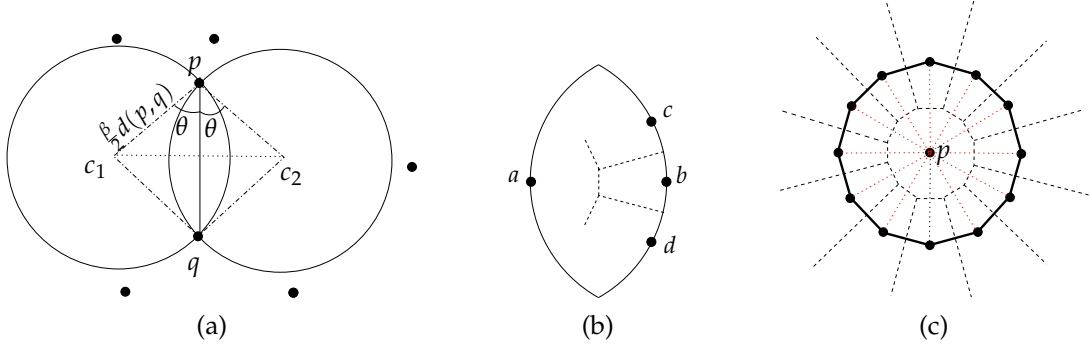

\centering
\begin{tabular}{ccccc}
\input{Skeleton.pspdftex}&\hspace*{5mm}&
\input{norng.pspdftex}&\hspace*{5mm}&
\input{wheel.pspdftex}\\
\small (a) && \small (b)&& \small (c)
\end{tabular}
\caption {(a) The skeleton edge $pq$ is a stable edge. (b) An  edge
$ab$ of the Relative Neighborhood Graph that is not stable. (c)
Point $p$ is disconnected in the SDG. $\VD$ is drawn with dashed
lines, $\SDG$ with solid lines, and the remaining $\DT$ edges with
dotted lines.} \label{Fig:Skeleton}
\end{figure}

A similar argument shows that the stable Delaunay graph contains the
closest pair in $P(t)$ as well as the {\it crust} of a set of points
sampled sufficiently densely along a 1-dimensional curve (see
\cite{Amenta,Crusts} for the definition of crusts and their
applications in surface reconstruction). We  sketch the argument for
closest pairs: If $(p,q)$ is a closest pair then $pq\in \DT(P)$, and
the two adjacent Delaunay triangles $\triangle pqr^+,\triangle
pqr^-$ are such that their angles at $r^+,r^-$ are at most $\pi/3$
each, so $e_{pq}$ is $(\pi/3)$-long, ensuring that $pq$ belongs to
any stable subgraph for $\alpha$ sufficiently small; see
\cite{KineticNeighbors} for more details. We omit the proof for
crusts, which is fairly straightforward too.

\paragraph{Remark.} 
%
Stable Delaunay graphs need not contain all the edges
of several other important subgraphs of the Delaunay triangulation,
including the Euclidean minimum spanning tree, the Gabriel graph,
the relative neighborhood graph, and the all-nearest-neighbors
graph. An illustration for the relative neighborhood graph (RNG) is
given in Figure~\ref{Fig:Skeleton}~(b). 
Recall that an edge $pq$ is in RNG if there is no point $r\in P$ 
such that $\max\{\|pr\|, \|qr\|\} < \|pq\|$. As shown in figure that 
$pq$ is an edge of RNG, but the angular extent of the dual Voronoi 
edge $e_{pq}$ can be arbitrarily small.
As a matter of fact, the stable Delaunay graph
need not even be connected, as is illustrated in
Figure~\ref{Fig:Skeleton}(c).

\section{Conclusion}
\label{sec:concl}

In this paper we introduced the notion of a stable Delaunay graph
(SDG), a large subgraph of the Delaunay triangulation, which retains
several useful properties of the full Delaunay triangulation. We proved
that a $4\alpha$-stable edge in (the Euclidean) $\DT(P)$ is $\alpha$-stable 
in $\DT^Q(P)$, where $Q$ is a regular $k$-gon for any
$k \ge 2\pi/\alpha$ and $\alpha$ is the stability parameter, and
that the dual Voronoi edge $e_{pq}^\poly$ in $\VD^\poly(P)$ contains
at least eleven breakpoints. 
Using these properties and the kinetic data structure for $\DT^\poly(P)$
developed in the companion paper~\cite{polydel}, we presented a 
linear-size KDS for maintaining a Euclidean $(8\alpha,\alpha)$-SDG 
as the input points move. The KDS processes only a nearly quadratic 
number of events 
if the points move along algebraic trajectories of bounded degree, and each
event can be processed in $O(\log n)$ time. We also showed that if
an edge is stable in the Delaunay triangulation under the Euclidean
norm, it is also stable in the Delaunay triangulation under any 
convex distance function sufficiently close to the Euclidean
norm, and vice versa.

Proving a subcubic upper bound on the number of topological changes
in the Euclidean Delaunay triangulation for a set of moving points
still remains elusive (in spite of the recent progress 
in~\cite{Rubin, RubinUnit}), but our result implies that if
the true bound is really close to cubic, or just significantly
super-quadratic, then the overwhelming majority of these changes
involve  edges appearing and disappearing while  the four vertices
of the two triangles adjacent to each such edge remain nearly
cocircular throughout the entire time in which the edge exists.

We conclude by mentioning two open problems:
\begin{itemize}
%
\item[(i)] Is there a KDS for maintaining a triangulation of the 
\emph{entire} convex hull of a set of moving points in the plane,
which is an \emph{approximate} Delaunay triangulation, defined
appropriately,
and which processes only a near-quadratic number of events? 
In particular, can the SDG maintained by our KDS be extended to a
triangulation scheme of $\conv (P)$ (recall Figure~\ref{fig:VDT}(b)), 
e.g., using the ideas from the kinetic triangulation schemes presented 
in~\cite{AWY,KRS}, which also undergoes only a near-quadratic number of 
topological changes during the motion?  Perhaps a deeper analysis of the
structure of the ``holes" in the stable sub-diagram may yield a solution
to this problem, using the fact that for every missing edge, the two
incident triangles form a near-cocircularity in the diagram. This may
lead to a scheme that fills in the holes by
near-Delaunay edges and that has the desired properties.

\item[(ii)] What are the other large and interesting subgraphs of
$\DT(P)$ that undergo only a near-quadratic number of topological 
changes under a motion of the points of $P$ of the above kind? 
For instance, can one prove that there are only a near-quadratic number
of changes in the $\alpha$-shape or the relative neighborhood graph of $P$ if the points 
of $P$ move along algebraic trajectories of bounded degree.
\end{itemize}

\small

\paragraph{Acknowledgements.}
{\small
P.A. and M.S. were supported by Grant 2012/229 from the U.S.-Israel 
Binational Science Foundation.  P.A. was also supported by
NSF under grants CCF-09-40671, CCF-10-12254, and CCF-11-61359, and by an ERDC
contract W9132V-11-C-0003.
L.G. was supported by NSF grants CCF-10-11228 and CCF-11-61480.
H.K. was supported by grant 822/10 from the
      Israel Science Foundation, grant 1161/2011 from the
      German-Israeli Science Foundation, and by the Israeli Centers
      for Research Excellence (I-CORE) program (center no.~4/11). %
N.R. was supported by Grants 975/06 and 338/09 from the
Israel Science Fund, by Minerva Fellowship Program of the Max Planck Society, by the Fondation Sciences Math\'{e}matiques de Paris (FSMP), and by a public grant overseen by the French National Research Agency (ANR) as part of the "Investissements d'Avenir" program (reference: ANR-10-LABX-0098).
M.S. was supported by NSF Grant CCF-08-30272, by Grants 338/09 and 892/13 from the Israel Science Foundation, by the Israeli Centers for Research Excellence (I-CORE) program (center no.~4/11), and by the Hermann Minkowski--MINERVA Center for Geometry at Tel Aviv University.
}
\end{document}